\newtheorem{hypothesis}{Hypothesis}
\newcommand{\dom}{\textup{dom}}
  \providecommand\BibTeX{{%
    \normalfont B\kern-0.5em{\scshape i\kern-0.25em b}\kern-0.8em\TeX}}}
\begin{document}
\fancyhead{}




\title{Modern Lower Bound Techniques in Database Theory and Constraint Satisfaction}

\author{D\'aniel Marx}
\email{marx@cispa.de}
\orcid{0000-0002-5686-8314}
\affiliation{%
  \institution{CISPA Helmholtz Center for Information Security}
  \streetaddress{Saarland Informatics Campus}
  \city{Saarbrücken}
  \country{Germany}
}


\begin{abstract}
  Conditional lower bounds based on $\textup{P}\neq \textup{NP}$, the Exponential-Time Hypothesis (ETH), or similar complexity assumptions can provide very useful information about what type of algorithms are likely to be possible. Ideally, such lower bounds would be able to demonstrate that the best known algorithms are essentially optimal and cannot be improved further. In this tutorial, we overview different types of lower bounds, and see how they can be applied to problems in database theory and constraint satisfaction.

\end{abstract}


\begin{CCSXML}
<ccs2012>
<concept>
<concept_id>10003752.10003777.10003779</concept_id>
<concept_desc>Theory of computation~Problems, reductions and completeness</concept_desc>
<concept_significance>500</concept_significance>
</concept>
<concept>
<concept_id>10003752.10010070.10010111</concept_id>
<concept_desc>Theory of computation~Database theory</concept_desc>
<concept_significance>500</concept_significance>
</concept>
</ccs2012>
\end{CCSXML}

\ccsdesc[500]{Theory of computation~Problems, reductions and completeness}
\ccsdesc[500]{Theory of computation~Database theory}

\keywords{computational complexity; conditional lower bounds; exponential-time hypothesis; parameterized complexity; database queries; constraint satisfaction problems}


\maketitle

\section{Introduction}

The design of efficient algorithms is in the focus of a large part of theoretical computer science research. The practical need to solve computational problems efficiently makes the systematic study of algorithmic efficiency highly motivated. Decades of research in algorithm design discovered mathematically beautiful and sometimes very practical algorithmic techniques that gave us deep insights into efficient computation in a wide range of contexts and application domains.  The field of computational complexity treats algorithmic problems and computation as formal mathematical objects and tries to prove relationships between them \cite{PapadimitriouBook,DBLP:books/daglib/0023084}.

Given the abundance of computation in our modern word, it is justified to consider algorithms and computation as a fundamental mathematical objects, on par with  basic objects in geometry, algebra, and combinatorics. Researchers in computational complexity try to learn as much as possible about the mathematical nature of computation. But more pragmatically, computational complexity can give very important messages to algorithm designers. By giving information about limits of computation, it can prevent researchers from wasting time in dead ends of study: trying to design algorithm for problems that cannot be efficiently solved.

Having techniques to prove negative results can profoundly change the way research in algorithms is done. For example, the theory of NP-hardness changed the search for polynomial-time algorithms from a hit and miss effort to a more systematically doable project. Without NP-hardness, we would not be able to distinguish problems that do not admit polynomial-time algorithms from problems where we just were not yet successful in finding algorithms. But with the possibility of giving negative evidence in the form of NP-hardness, the lack of a known answer means that the question is still an active research problem: we typically expect that the algorithmic problem at hand can be eventually classified as either polynomial-time solvable or NP-hard, and it is worth trying to resolve the question one way or the other.

In a sense, computation complexity  has not progressed much in the past 50 years despite intense efforts: the core questions underlying the hardness of computation, such as the celebrated $\textup{P}\neq \textup{NP}$ problem, are still wide open. However, by accepting certain well-chosen complexity assumptions, such as the $\textup{P}\neq \textup{NP}$ hypothesis, we can obtain \emph{conditional lower bounds} explaining the apparent complexity of a large number of problems. As a general theme in computational complexity research, we can see a proliferation of new assumptions. These assumptions typical postulate that a certain type of algorithm does not exist for a particular fundamental problem (e.g., for Boolean satisfiability). The assumptions are chosen to be both plausible and have strong explanatory power: in many cases, they are able to show that the algorithms that we currently have are optimal and cannot be improved any further. Indeed, from the viewpoint of algorithm design, this is precisely the role of computation complexity: to separate problems where our current knowledge is complete from problems where there are still algorithmic ideas waiting to be discovered.

Some of the complexity assumptions are standard (such as $\textup{P}\neq \textup{NP}$), while others may be more controversial (such as the Strong Exponential-Time Hypothesis (SETH)). Therefore, the reader may wonder about the usefulness of proving conditional lower bounds based on unproven assumptions. It is important to point out that these conditional lower bounds are valuable even if we have doubts about the validity of the assumptions. Suppose we use a complexity assumption $X$ to prove that a certain type of algorithm does not exists for a specialized problem $P$, perhaps in an application domain such as database theory. Even if we do not believe in the validity of the assumption $X$, this conditional lower bound shows that the difficulties we face when attacking problem $P$ have nothing to do with the specific details of problem $X$ or the application area: we are really facing assumption $X$ in disguise and we need to disprove that first before any progress can be made on problem $P$. In other words, the conditional lower bound shows that we can stop trying to obtain the desired algorithm for problem $P$, as any such effort would be better spent on trying to disprove the (typically more fundamental) assumption $X$.

The purpose of this article is to highlight some of the lower bound techniques and show what kind of results they can deliver in the context of database theory. We will introduce a number of assumptions, contrast them, and show examples of their use. It has to be emphasized that this article does not aim to be an up to date survey of lower bounds in the area of database theory. The focus is more on the diverse set of assumptions and lower bound techniques that exist, rather than on presenting an exhaustive list of applications for each technique. Some of these example applications come directly from the literature on database query evaluation, but others were stated in essentially equivalent forms in other domains: for constraint satisfaction problems (CSP) or for graph-theoretic problems. Therefore, we begin with introducing the terminology for all these domains and then discuss the results using the most appropriate terminology.

\section{The four domains}
Conjunctive query evaluation is a fundamental problem in database theory. This problem can be equivalently seen as a CSP instance and therefore some of the results in the CSP literature are directly relevant. A large part of the CSP literature uses a formulation using the homomorphism of relational structures, which in some special cases degenerate to traditional graph problems. In this section, we introduce the terminology for all these domains and show how they are connected to each other.

\subsection{Database queries}
A
\emph{join query} $Q$ is an expression of the form
\begin{equation*}
R_1(a_{11},\ldots,a_{1r_1})\bowtie\cdots\bowtie R_m(a_{m1},\ldots,a_{mr_m}),
\label{eqn:joinquery}
\end{equation*}
where the $R_i$ are \emph{relation names} with \emph{attributes}
$a_{i1},\ldots,a_{ir_i}$.
Let $A$ be the set of all attributes
occurring in $Q$ and $n = |A|$.  A \emph{database instance} $\textbf{D}$ for
$Q$ consists of a domain $\dom(\textbf{D})$ and relations $R_i(\textbf{D})\subseteq \dom(\textbf{D})^{r_i}$ of arity $r_i$. It is common to
think of the relation $R_i(\textbf{D})$ as a table whose columns are labeled
by the attributes $a_{i1},\ldots,a_{ir_i}$ and whose rows are the
tuples in the relation. The \emph{answer}, or \emph{set of
  solutions}, of the query $Q$ in $\textbf{D}$ is the $n$-ary relation $Q(\textbf{D})$
with attributes $A$ consisting of all tuples $t$ whose projection on
the attributes of $R_i$ belongs to the relation $R_i(\textbf{D})$, for all
$i$. Given the query $Q$ and the database $D$, the task in the \textsc{Join Query} problem is to compute the set $Q(\textbf{D})$. In the \textsc{Boolean Join Query} problem, we only need to decide if $Q(\textbf{D})$ is empty or not. One can also define the counting version of the problem (i.e, compute $|Q(\textbf{D})|$).

The {\em primal graph} of the query has the set $A$ of attributes as vertex set and two variables are adjacent if there is a relation containing both of them. The {\em hypergraph} of the instance is defined similarly: the vertex set is $A$, and each relation $R_i(a_{i1},\dots, a_{ir_i})$ is represented by a hyperedge $\{a_{i1},\dots, a_{ir_i} \}$. 
\subsection{Constraint satisfaction problems}
Constraint satisfaction is
a general framework that includes many standard algorithmic problems
such as satisfiability, graph coloring, database queries, etc. A
constraint satisfaction problem (CSP) instance consists of a set $V$ of
variables, a domain $D$, and a set $C$ of constraints, where each
constraint is a relation on a subset of the variables. The task is to
assign a value from $D$ to each variable in such a way that every
constraint is satisfied.  For example, 3SAT can be interpreted as a CSP instance
where the domain is $\{0,1\}$ and the constraints in $C$ correspond to
the clauses (thus the arity of each constraint is 3).

Formally, an instance $I$ of a {\em constraint satisfaction problem} is a triple $I=(V ,D, C)$,
where:
\begin{itemize}
\item $V$ is a set of variables,
\item $D$ is a domain of values,
\item $C$ is a set of constraints, $\{c_1,c_2,\dots ,c_q\}$.
Each constraint $c_i\in C$ is a pair $\langle
s_i,R_i\rangle$, where:
\begin{itemize}
\item $s_i$ is a tuple of variables of length $m_i$, called the {\em constraint scope,} and
\item $R_i$ is an $m_i$-ary relation over $D$, called the {\em constraint
  relation.}
\end{itemize}
\end{itemize}

For each constraint $\langle s_i,R_i\rangle$ the tuples of $R_i$
indicate the allowed combinations of simultaneous values for the
variables in $s_i$. The length $m_i$ of the tuple $s_i$ is called the
{\em arity} of the constraint. A {\em solution} to a constraint satisfaction
problem instance is a function $f$ from the set of variables $V$ to
the domain $D$ of values such that for each constraint $\langle
s_i,R_i\rangle$ with $s_i = (v_{i_1},v_{i_2},\dots,v_{i_m})$, the
tuple $( f(v_{i_1}), f(v_{i_2}),\dots,f(v_{i_m}))$ is a member of
$R_i$. Given a CSP instance $I$, we can consider the problem of deciding if a solution exists, the problem of finding all solutions, or the problem of counting the number of solutions.

We say that an instance is {\em
  binary} if each constraint relation is binary, that is, $m_i=2$ for
every constraint\footnote{It is unfortunate that while some communities use
  the term ``binary CSP'' in the sense that each constraint is binary
  (as does this dissertation), others use it in the sense
  that the variables are 0-1, that is, the domain size is 2.}. 
The {\em primal graph} (or {\em Gaifman graph}) of a CSP instance $I=(V ,D, C)$ is a graph $G$
with vertex set $V$, where $x,y\in V$ form an edge if and only if there is a
constraint $\langle
s_i,R_i\rangle\in C$ with $x,y\in s_i$. The \emph{hypergraph} of an instance $I=(V,D,C)$ has $V$ as its vertex
set and for every constraint in $C$ a hyperedge that consists of all
variables occurring in the constraint.

Given a join query $Q$ and a database $\textbf{D}$, we can turn the query problem into a CSP instance $I$ in a straightforward way: the domain of $I$ is $\dom(\textbf{D})$, the set of variables correspond to the attributes $A$ of $Q$, and for each relation $R_i$, there is a corresponding constraint $c_i$ on the variables $a_{i1}$, $\dots$, $a_{ir_i}$. It is clear that the tuples in the answer set of $Q$ in $\textbf{D}$ are in one to one correspondence with the solutions of the CSP instance $I$. This establishes a correspondence between the basic algorithmic problems of the two domains.

It is worth pointing out that even though the two problems are equivalent, a large part of CSP research focuses on problem instances where the domain has small constant size and the number of constraints is large (for example, \textsc{3SAT} is such a problem). This has to be contrasted with the typical setting in database theory research where we assume that there are only a small number of attributes and relations have low arity, but the domain can be large and the number of tuples in a relation can be large.

\subsection{Graph problems}\label{sec:graph-problems}
Given a binary CSP instance $I=(V,D,C)$, we can equivalently formulate it as a graph problem. We construct a graph $G$ the following way: let us introduce $|V|\cdot |D|$ vertices $w_{v,d}$ ($v\in V$, $d\in D$) and for ever constraint $c_i=\langle (u,v),R_i\rangle$, let us make $w_{u,d_1}$ and $w_{v,d_2}$ adjacent if and only if $(d_1,d_2)\in R_i$. Let $W_i=\{w_{i,d}\mid d\in D\}$ and consider the partition $\mathcal{P}=\{W_1,\dots, W_{|V|}\}$. We say that a subgraph $H$ of $G$ {\em respects the partition $\mathcal{P}$} if every class of the partition $\mathcal{P}$ contains exactly one vertex of $H$.

Let $f:V\to D$ be a solution of $I$. If we consider the vertices $\{w_{v,f(v)}\mid v\in V\}$, then it is easy to see that they induce a subgraph $H$ that respects $\mathcal{P}$ and isomorphic to the primal graph of $I$. Conversely, it is not difficult to see that if $G$ has a subgraph that respects $\mathcal{P}$ and is isomorphic to the primal graph of $I$, the it describes a solution of $I$. Therefore, the CSP instance can be described by an instance of {\em partitioned subgraph isomorphism}: given graphs $H$ and $G$, and partition $\mathcal{P}$ of $V(G)$ into $|V(H)|$ classes, find a subgraph of $V$ that respects $\mathcal{P}$ and is isomorphic to $H$. This problem is a natural variant of the standard {\em subgraph isomorphism} problem (find a subgraph of $G$ isomorphic to $H$) and, as we have seen, its complexity is tightly connected to the complexity of CSP instances where the primal graph is $H$.

There is another way in which graph-theoretic notions can describe the solutions of a CSP instance. Consider a binary  CSP instance $I=(V,D,C)$ where every constraint $c_i=\langle (u,v),R_i\rangle$ contains the same binary relation $R_i=R$, which we further assume to be symmetric (that is, $(d_1,d_2)\in R$ if and only $(d_2,d_1)\in R$). Let $H$ be the primal graph of $I$ and let $G$ be a graph with vertex set $D$ where $d_1,d_2\in D$ are adjacent if and only if $(d_1,d_2)\in R$. A {\em homomorphism from $H$ to $G$} is a mapping $f:V(H)\to V(G)$ such that if $u$ and $v$ are adjacent in $H$, then $f(u)$ and $f(v)$ are adjacent in $G$. Note that $f$ does not have to be injective (i.e., $f(u_1)=f(u_2)$ is possible) and if $u$ and $v$ are not adjacent, then we {\em do not} require  that $f(u)$ and $f(v)$ be nonadjacent as well. It is easy to see that every solution $f:V\to D$ of $I$ describes a homomophism from $H$ to $G$, in fact, these homomorphisms are in one to one correspondence with the solutions of $I$. Therefore, the complexity of finding a homomorphism from a fixed graph $H$ to the input graph $G$ is tightly connected to the complexity of CSP for instances with primal graph $H$ where the same symmetric relation $R$ appears in every constraint. If the relation $R$ is not symmetric, then a similar connection can be made to the homomorphism problem in directed graphs.

\subsection{Relational structures}
The connection between CSP and graph homomorphisms that we have seen in the previous section has two major limitations: it worked only for binary CSP instances and only if every constraint contained the same relation $R$. These limitations can be removed if we move from graphs to the much more general setting of relational structures.

A {\em vocabulary} $\tau$ is a
finite set of relation symbols of specified arities. The {\em arity} of
$\tau$ is the maximum of the arities of all relational symbols it
contains. A $\tau$-structure $\mathbf{A}$ consists of a finite set $A$
called the {\em universe} of $\mathbf{A}$ and for each relation symbol $R\in
\tau$, say, of arity $k$, a $k$-ary relation $R^\mathbf{A}\subseteq
A^k$.  A {\em
  homomorphism} from a $\tau$-structure $\mathbf{A}$ to a
$\tau$-structure $\mathbf{B}$ is a mapping $h : A \to B$ from the
universe of $\mathbf{A}$ to the universe of $\mathbf{B}$ that
preserves all relations, that is, for all $R\in \tau$, say, of arity
$k$, and all tuples $(a_1,\dots,a_k)\in R^\mathbf{A}$ it holds that
$(h(a_1),\dots,h(a_k))\in R^{\mathbf{B}}$. Note that if $\tau$ contains only a single relational symbol, which has arity $2$, then $\tau$-structures are essentially directed graphs and the homomorphism problem between $\tau$-structures is equivalent to the homomorphism problem on directed graphs.

More generally, we can express every CSP instance $I=(V,D,C)$ as a homomorphism problem the following way. Let $c=|C|$ be the number of constraints. Let the vocabulary $\tau$ contain $c$ symbols $Q_1$, $\dots$  $Q_c$, where symbol $Q_i$ has the same arity $m_i$ as the constraint $c_i=\langle
s_i,R_i\rangle$. We define $\tau$-structure $\mathbf{A}$ over the universe $V$ such that $Q^{\mathbf{A}}_i$ contains only the tuple $s_i$. We define $\tau$-structure $\mathbf{B}$ over the universe $D$ such that $Q^{\mathbf{B}}_i$ is precisely the relation $R_i$ appearing in constraint $c_i$. Now it can be verified that a mapping $f:V\to D$ is a solution of $I$ if and only if $f$ is a homomorphism from $\mathbf{A}$ to $\mathbf{B}$.

\section{Unconditional lower bounds}\label{sec:uncond-lower-bounds}
Ideally, we would like to prove negative results and lower bounds that are incontestably true mathematical statements. However, our inability to prove the  $\textup{P}\neq \textup{NP}$ hypothesis is a major barrier that prevents us from proving most negative statements of interest. For all we know, it is still possible that  $\textup{P}=\textup{NP}$ and we can solve all database query and CSP instances in polynomial time, and hence at the moment we cannot expect to unconditionally prove any result that rules out such algorithms. As long as we are in the classical setting of computation typically studied in computational complexity (algorithm is given an input, needs to compute a yes-no output), there is little hope in proving strong unconditional lower bounds.

We can hope to obtain unconditional lower bounds only if we deviate from the classical setting: for example, the problem involves the cost of accessing the input or the cost of communication. We show a particular, very simple setting in which we have tight unconditional lower bounds. If the task is to compute the answer to a join query, then the size of the answer is obviously an unconditional lower bound on the number of steps needed for computing the answer. This raises the question: what bounds can we give on the size of the answer and are there query evaluation algorithms that match this lower bound?

Formally, let $Q=
R_1(a_{11},\ldots,a_{1r_1})\bowtie\cdots\bowtie R_m(a_{m1},\ldots,a_{mr_m})$ be a join query and 
let $\textbf{D}$ be a database instance for $Q$ such that every relation $R_i(\textbf{D})$ contains at most $N$ tuples. What can we say about the size of the answer? It is easy to see that $N^m$ is an obvious upper bound: every tuple appearing in the answer chooses one of the at most $N$ possibilities in each of the $m$ relations. But this bound is often very far from being tight. For example, for the query $Q=R_1(a_1,a_2)\bowtie R_2(a_1,a_3)\bowtie R_3(a_2,a_3)$, it is known that the upper bound is $N^{3/2}$ instead of $N^{3}$. The fractional number $3/2$ in the exponent of $N$ suggests that obtaining the bound cannot be completely obvious. Still, precise bounds can be obtained in a clean way using known combinatorial techniques.

Let us define the hypergraph $H$ of the query $Q$ the following way: the vertices are the attributes and we introduce a hyperedge $\{a_{i1},\dots,a_{ir_i}\}$ for each relation $R_i(a_{i1},\dots,a_{ir_i})$. A {\em fractional cover} of a hypergraph $H$ is a mapping $f:V(H)\to [0,1]$ such that for every vertex $v\in V(H)$, we have $\sum_{e\in E(H), v\in e}f(e)\ge 1$. That is, $f$ is a weight assignment on the edges such that the total weight put on each vertex $v$ is at least 1. The {\em weight} of $f$ is $\sum_{e\in E(H)}f(e)$ and the \emph{fractional edge cover number $\rho^*(H)$} of $H$ is the minimum weight of a fractional edge cover of $H$. For example, for the query
$Q=R_1(a_1,a_2)\bowtie R_2(a_1,a_3)\bowtie R_3(a_2,a_3)$, the hypergraph $H$ is a triangle and $\rho^*(H)=3/2$ (assigning weight $1/2$ to each edge is a fractional edge cover and a quick analysis shows that this is optimal).

Using a simple application of Shearer's Lemma \cite{MR859293}, which is a purely combinatorial statement on entropy, one can show that $N^{\rho^*(H)}$ is an upper bound on the number of solutions.
\begin{theorem}[Atserias, Grohe, Marx \cite{DBLP:journals/siamcomp/AtseriasGM13}]\label{th:sizeupper}
Let $Q$ be a join query with hypergraph $H$. Let $\mathbf{D}$ be a database for $Q$ where every relation has at most $N$ tuples. Then the answer of $Q$ in $\mathbf{D}$ has size at most $N^{\rho^*(H)}$.
\end{theorem}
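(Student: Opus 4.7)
The plan is to view the answer set $Q(\mathbf{D})$ as a set of $|A|$-tuples and to bound its logarithm via an entropy argument. Let $\mathbf{X} = (X_a)_{a \in A}$ be a tuple drawn uniformly at random from $Q(\mathbf{D})$; then $H(\mathbf{X}) = \log |Q(\mathbf{D})|$, so it suffices to prove $H(\mathbf{X}) \le \rho^*(H) \cdot \log N$. The strategy is to relate the total entropy $H(\mathbf{X})$ to the entropies of its projections $\mathbf{X}_e := (X_a)_{a \in e}$ on the hyperedges $e$ of $H$, and then to exploit that each such projection lies in a relation of size at most $N$.

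The first ingredient is a uniform bound on the edge-projections: since the tuple satisfies the query, $\mathbf{X}_e$ is always an element of the corresponding relation $R_i(\mathbf{D})$, which has at most $N$ tuples. Hence its support has size at most $N$, and so $H(\mathbf{X}_e) \le \log N$ for every hyperedge $e$ of $H$. The second, decisive ingredient is Shearer's Lemma in its fractional form: if $f : E(H) \to [0,1]$ is any fractional edge cover, i.e.\ $\sum_{e \ni a} f(e) \ge 1$ for every attribute $a$, then for any joint distribution on $(X_a)_{a \in A}$ one has
\begin{equation*}
H(\mathbf{X}) \;\le\; \sum_{e \in E(H)} f(e) \cdot H(\mathbf{X}_e).
\end{equation*}
This is exactly the combinatorial entropy inequality alluded to in the excerpt, and applying it here is the heart of the argument.

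Combining the two ingredients with the choice of $f$ being an optimal fractional edge cover of $H$ yields
\begin{equation*}
\log |Q(\mathbf{D})| \;=\; H(\mathbf{X}) \;\le\; \sum_{e \in E(H)} f(e) \log N \;=\; \rho^*(H) \cdot \log N,
\end{equation*}
which after exponentiating gives the desired bound $|Q(\mathbf{D})| \le N^{\rho^*(H)}$.

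The main conceptual obstacle is Shearer's Lemma itself; once it is invoked, the rest of the proof is essentially a one-line substitution. The standard derivation of Shearer's Lemma in the fractional form proceeds by fixing an arbitrary ordering of the attributes and writing $H(\mathbf{X}_e)$ as a sum of conditional entropies $H(X_a \mid X_b : b <\text{-preceding-}a)$ using the chain rule; the covering condition $\sum_{e \ni a} f(e) \ge 1$, together with the fact that conditioning only decreases entropy, then lets one compare the weighted sum on the right-hand side with the chain-rule expansion $H(\mathbf{X}) = \sum_a H(X_a \mid X_{<a})$. I would either cite this or include it as a short lemma, depending on how much the audience is assumed to know; it is the only nontrivial step, and the uniform-projection bound together with the definition of $\rho^*(H)$ does the rest.
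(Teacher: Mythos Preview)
Your proposal is correct and is exactly the approach the paper points to: the paper does not spell out a proof but states that the bound follows by ``a simple application of Shearer's Lemma,'' and your argument---take a uniform tuple from $Q(\mathbf{D})$, bound each edge-projection entropy by $\log N$, and apply the fractional form of Shearer's inequality with an optimal fractional edge cover---is precisely that application. The only trivial omission is the case $Q(\mathbf{D})=\emptyset$, where the bound holds vacuously.
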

Conversely, we can show that $N^{\rho^*(H)}$ is essentially a tight lower bound. As it is usual with lower bound statements, we have to be a bit more careful with the formulation.
\begin{theorem}[Atserias, Grohe, Marx \cite{DBLP:journals/siamcomp/AtseriasGM13}]\label{th:sizelower}
Let $Q$ be a join query with hypergraph $H$. For infinitely many $N\ge 1$, there is a database $\mathbf{D}_N$ for $Q$ where every relation has at most $N$ tuples and the answer of $Q$ in $\mathbf{D}_N$ has size at least $N^{\rho^*(H)}$.
\end{theorem}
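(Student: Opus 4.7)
The plan is to exhibit, for infinitely many $N$, a database in which every relation is a full Cartesian product over carefully chosen subsets of the domain, so that the size of the output factors through a fractional LP solution of the dual to the fractional edge cover LP.

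First, invoke LP duality. The definition of $\rho^*(H)$ is a packing LP over the edges; its dual is the \emph{fractional independent set} problem, maximizing $\sum_{v\in V(H)} g(v)$ over all $g:V(H)\to[0,1]$ with $\sum_{v\in e} g(v)\le 1$ for every hyperedge $e\in E(H)$. By strong LP duality, the optimum of the dual equals $\rho^*(H)$. Fix an optimal dual solution $g$; since the LP has rational coefficients, we may take $g$ to have rational entries, and let $M$ be a common denominator of all the values $g(v)$.

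Next, construct the instance. Restrict attention to integers of the form $N = K^M$, where $K$ is an arbitrary positive integer (so there are infinitely many such $N$). For each attribute $v\in A$, let $S_v$ be an arbitrary set of size $N^{g(v)}=K^{Mg(v)}$; since $Mg(v)$ is a nonnegative integer, this is a well-defined integer. Let $\dom(\mathbf{D}_N)=\bigcup_{v\in A} S_v$ (the $S_v$ may be taken disjoint to avoid any accidental collisions). For every relation $R_i(a_{i1},\dots,a_{ir_i})$, define
\begin{equation*}
R_i(\mathbf{D}_N)\;=\; S_{a_{i1}}\times\cdots\times S_{a_{ir_i}}.
\end{equation*}

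Now verify the two bounds. The size of $R_i(\mathbf{D}_N)$ is
$\prod_{j=1}^{r_i} N^{g(a_{ij})} = N^{\sum_{v\in e_i} g(v)} \le N^{1} = N$, using the feasibility constraint $\sum_{v\in e_i} g(v)\le 1$ for the hyperedge $e_i=\{a_{i1},\dots,a_{ir_i}\}$. On the other hand, because every relation is a full Cartesian product over the $S_v$'s, any tuple in $\prod_{v\in A} S_v$ projects to a tuple of $R_i(\mathbf{D}_N)$ for every $i$, and is therefore in the answer. Hence
\begin{equation*}
|Q(\mathbf{D}_N)|\;\ge\;\prod_{v\in A} |S_v|\;=\;\prod_{v\in A} N^{g(v)}\;=\;N^{\sum_v g(v)}\;=\;N^{\rho^*(H)}.
\end{equation*}

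The only genuine technical obstacle is the non-integrality of the dual optimum $g$: we cannot literally take sets of size $N^{g(v)}$ unless $N^{g(v)}$ is an integer. Restricting $N$ to powers of the common denominator's exponent (the choice $N=K^M$ above) resolves this cleanly while still yielding infinitely many admissible values of $N$. All other steps are tight by construction, because the dual LP value is exactly $\rho^*(H)$, so no slack is lost.
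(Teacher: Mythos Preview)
The paper itself does not prove this theorem; it merely states the result and cites the original reference. Your argument is correct and is essentially the construction given in the cited Atserias--Grohe--Marx paper: take an optimal rational solution $g$ to the dual (fractional independent set) LP, restrict $N$ to perfect $M$-th powers where $M$ is a common denominator, give each attribute $v$ a domain of size $N^{g(v)}$, and let every relation be the full Cartesian product over its attributes.

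One small terminological slip: you call the defining LP for $\rho^*(H)$ a ``packing LP over the edges,'' but fractional edge cover is a \emph{covering} LP (minimize edge weights subject to vertex-coverage constraints). Its dual, the fractional independent set LP you actually use, is the packing LP. This does not affect the mathematics, only the wording.
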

Theorem~\ref{th:sizelower} provides an unconditional lower bound for any algorithm computing the full answer of $Q$ (but of course it does not provide any bound on algorithms that just decide whether the answer is empty or compute the size of the answer). Are there algorithms that match this lower bound? The combinatorial proof of Theorem~\ref{th:sizeupper} can be turned into an algorithm with a constant overhead in the exponent, that is, to obtain $N^{\rho^*(H)+O(1)}$ running time. With additional techniques, it is possible to give tight algorithms that tightly  match the lower bound.
\begin{theorem}[\cite{DBLP:conf/icdt/Veldhuizen14,DBLP:journals/jacm/NgoPRR18}]
Let $Q$ be a join query with hypergraph $H$. Let $\mathbf{D}$ be a database for $Q$ where every relation has at most $N$ tuples. Then the answer of $Q$ in $\mathbf{D}$ can be computed in time $O(N^{\rho^*(H)})$.
\end{theorem}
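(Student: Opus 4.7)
The plan is to design a recursive algorithm that processes the attributes in some order $a_1,\ldots,a_n$ and, rather than computing pairwise joins (which could blow up in intermediate steps), maintains at each level a set of candidate values for the current attribute obtained by intersecting projections of the relations that mention it. The key point is that an intersection of a family of sets can be computed in time proportional to the size of the smallest set, and more generally a geometric mean of the projection sizes weighted by fractional-cover exponents.

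Concretely, I would fix a fractional edge cover $x : E(H) \to [0,1]$ of weight $\rho^*(H)$ and, assuming without loss of generality that every attribute is covered with total weight exactly $1$, argue by induction on the number of attributes. At the top level, pick an attribute $a$ and compute $U \subseteq \dom(\mathbf{D})$ as the intersection $\bigcap_{e \ni a} \pi_a(R_e(\mathbf{D}))$, then for each $v \in U$ recurse on the residual query obtained by substituting $a := v$ in the relevant relations. In the sub-instance corresponding to $v$, the relation sizes are $N_e(v) := |\sigma_{a=v}(R_e(\mathbf{D}))|$ when $a \in e$, and at most $N$ otherwise.

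The analysis combines two ingredients. First, the cost of the intersection step is bounded by $\prod_{e \ni a} N^{x_e}$ by a Shearer/H\"older-type inequality, using that $\sum_{e \ni a} x_e \ge 1$. Second, summing the recursive costs over $v \in U$ relies on the inductive hypothesis together with an inequality of the form
\[
\sum_{v \in U} \prod_{e \ni a} N_e(v)^{x_e} \;\le\; \prod_{e \ni a} N^{x_e},
\]
which is itself a direct instance of the AGM-style bound of Theorem~\ref{th:sizeupper} applied to the fiber problem. Combining these, the total work at each recursive level stays within the budget $N^{\rho^*(H)}$, and a telescoping sum over the $n$ levels preserves the bound up to a multiplicative constant depending only on $H$.

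The main obstacle, where Theorem~\ref{th:sizeupper} alone does not suffice, is implementing the projections and intersections within the stated time bound, because naively a single intersection could cost $O(N \log N)$ and naively collecting slices $\sigma_{a=v}(R_e(\mathbf{D}))$ could cost $O(N)$ per value of $v$. This forces the use of carefully engineered data structures: a trie representation of each relation ordered consistently with the elimination order (as in LeapFrog Triejoin), or the index structure of NPRR, so that each $\pi_a(R_e(\mathbf{D}))$ is precomputed, slicing on $a=v$ is $O(1)$ amortized, and an intersection reduces to a coordinated sorted-merge whose running time is bounded by the smallest operand. Shaving the logarithmic factor so that the final bound is truly $O(N^{\rho^*(H)})$, and not $O(N^{\rho^*(H)} \log N)$, requires choosing hash-based rather than comparison-based access in the inner loop and bookkeeping the preprocessing cost separately; this is the delicate part of the argument.
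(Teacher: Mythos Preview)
The paper does not give its own proof of this theorem: it is stated with citations to Veldhuizen's Leapfrog Triejoin paper and the Ngo--Porat--R\'e--Rudra paper, and the surrounding text only remarks that ``with additional techniques'' one can remove the $+O(1)$ overhead in the exponent that a direct algorithmization of Theorem~\ref{th:sizeupper} would incur. So there is nothing in the paper to compare against beyond the bare citation.

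That said, your sketch is essentially the generic-join / NPRR argument from those cited works, and the skeleton is sound: attribute-at-a-time recursion, intersection of projections, and the H\"older-type inequality
\[
\sum_{v\in U}\prod_{e\ni a} N_e(v)^{x_e}\le \prod_{e\ni a}\Bigl(\sum_{v} N_e(v)\Bigr)^{x_e}\le \prod_{e\ni a} N^{x_e}
\]
to bound the total recursive work. Two small caveats. First, ``without loss of generality every attribute is covered with total weight exactly~$1$'' is not literally free: an optimal fractional cover can be slack at some vertices, and you cannot in general tighten all constraints simultaneously without raising the objective. The standard fix is to keep the slack and observe that $\sum_{e\ni a} x_e\ge 1$ is the direction you need both for the $\min\le\prod$ bound and for H\"older. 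Second, your ``delicate part'' is exactly right: without hash-based dictionaries the bound picks up a $\log N$ factor (this is what Leapfrog Triejoin in the comparison model actually achieves), and getting a clean $O(N^{\rho^*(H)})$ requires the RAM-model hashing assumptions that NPRR make explicit. You should state that model assumption up front rather than leave it to the last paragraph.
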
  

\section{NP-hardness}\label{sec:np-hardness}

Since its conception and development in the early 70s \cite{DBLP:conf/stoc/Cook71,DBLP:conf/coco/Karp72,DBLP:books/fm/GareyJ79}, NP-hardness has been the main workhorse of providing intractability results for computational problems. The class NP contains decision problems that can be solved in nondeterministic polynomial time. This robust definition covers (the decision version of) most combinatorial and optimization problems of interest. If a problem $P$ is NP-hard, then this means in particular that a polynomial-time algorithm for $P$ would give polynomial-time algorithms for every problem in NP, which we take as strong evidence that such an algorithm is unlikely. We prove NP-hardness of $P$ by giving a \emph{polynomial-time reduction} from a known NP-hard problem $Q$; this reduction shows that a polynomial-time algorithm for $P$ would give a polynomial-time algorithm for $Q$ and hence for every problem in NP.

Most of the problems studied in database theory or in CSP research are obviously NP-hard, as they contain basic hard problems as special cases. Therefore, it is not completely obvious how to ask reasonable questions about NP-hardness where the answer is not trivial. One direction is to consider restricted parameter values. For example, for CSP problems, one may ask if the problem remains NP-hard if we restrict the domain size $|D|$ to 2 (yes, as \textsc{3SAT} is still a special case), or restrict the constraints to binary (yes, \textsc{3-Coloring} is a special case), or we apply both restrictions (no, with $|D|=2$ and binary constraints the problem becomes the polynomial-time solvable \textsc{2SAT}). Restricting the number $|V|$ of variables to any constant, say 10,  gives a polynomial-time solvable special case even if the domain size $|D|$ is arbitrarily large (as we can try the at most $|D|^{10}$ possible assignments in polynomial time).

More generally, we can introduce restrictions on the problem in a systematic way and determine which of the restrictions lead to polynomial-time solvable and NP-hard special cases. In CSP research, a very well studied family of special cases arise from restricting the type of relations that are allowed in the constraints. Formally, let $D$ be a finite domain and let $\mathcal{R}$ be a finite set of relations over $D$. Then we denote by \textsc{CSP($\mathcal{R}$)} the special case of the general CSP problem where the instance is allowed to contain only constraints $c=\langle s,R\rangle$ where $R\in\mathcal{R}$. Equivalently, we can state this restriction in the language of the homomorphism problem in a very compact way. Let $\tau$ be a vocabulary and let $\mathbf{B}$ be a $\tau$-structure. Then $\textsc{HOM}(\_,\mathbf{B})$ is the special case of the general homomorphism problem where the input is a pair $(\mathbf{A},\mathbf{B})$ of $\tau$-structures, where $\mathbf{A}$ is arbitrary.

A classic result of Schaefer \cite{DBLP:conf/stoc/Schaefer78} characterized the complexity of $\textsc{CSP}(\mathcal{R})$ for any fixed finite set $\mathcal{R}$ of relation over the Boolean domain (i.e., $|D|=2$). More precisely, Schaefer's Dichotomy Theorem showed that every such $\textsc{CSP}(\mathcal{R})$ problem is either polynomial-time solvable or NP-hard, and gave a clean characterization of the two cases. For several years, it was an outstanding open problem to prove an analog of this result for larger fixed domains (the Feder-Vardi Conjecture \cite{MR2000e:68063}). After partial progress, the conjecture was resolved in 2016 independently by Bulatov \cite{DBLP:conf/focs/Bulatov17} and Zhuk \cite{DBLP:conf/focs/Zhuk17}.

While these classification results are cornerstones of modern CSP research, translating them into the language of database theory does not give much useful insight. Indeed, these characterization results would concern special cases where the domain of the attributes have constant size and every database relations is of constant size. From the viewpoint of database theory, a more relevant family of special cases can be obtained by restricting the structure of the query. Let us focus on \textsc{Boolean Join Query}, the problem of deciding if the answer set is empty or not.
If we assume, for example, that the primal graph of the query is a tree (acyclic graph), then it is easy to solve the problem in polynomial time, while the problem may remain NP-hard under other restrictions (for example, under the assumption that the primal graph has maximum degree 3 or is a planar graph etc.). Formally, in the language of CSPs, if $\mathcal{G}$ is any class of graphs, we may want to understand the complexity of the problem $\textup{CSP}(\mathcal{G})$, which is CSP under the restriction that the primal graph belongs to class $\mathcal{G}$.

Can we identify every class $\mathcal{G}$ that leads to polynomial-time solvable special cases and use NP-hardness to give evidence of hardness for every other case? It is known that if $\mathcal{G}$ contains only graphs of bounded treewidth, then $\textup{CSP}(\mathcal{G})$ becomes polynomial-time solvable. Treewidth is a combinatorial measure of graphs that can be though of as a number expressing how treelike the graph is: treewidth is 1 if and only if the graph is an acyclic forest, while other fixed values mean that the graph is similar to a tree with each node being
replaced by a small graph. While the formal definition of treewidth
is technical, it models very faithfully the requirements that make the
algorithmic paradigm ``split on small separators and recurse'' work
and its mathematical naturality is further evidenced by the fact that
it was independently discovered in equivalent formulations at least
three times
\cite{halin72,DBLP:journals/jct/BerteleB73,DBLP:journals/jct/RobertsonS84}.
The precise definition of treewidth is not essential for this paper; we include the definition here only for completeness.

\begin{definition}
A \emph{tree decomposition} of a graph $G$ is a pair $(\mathcal{B},T)$ where $T$
is a tree and $\mathcal{B}=\{B_{t} \mid t\in V(T)\}$ is a collection of subsets
of $V(G)$ such that:
\begin{itemize}
\item $\bigcup_{t \in V(T)} B_{t} = V(G)$, 
\item for
each edge $xy \in E(G)$, $\{x,y\}\subseteq B_t$ for some
$t\in V(T)$; 
\item for each $x\in V(G)$ the set $\{ t \mid x \in B_{t} \}$ induces
  a connected subtree of $T$.
\end{itemize}
The \emph{width} of the tree
  decomposition is $\max_{t \in V(T)} \{|B_{t}| - 1\}$.
  The \emph{treewidth} of a graph $G$ is the minimum width over all
  tree decompositions of $G$.  We denote by $\textup{tw}(G)$ the treewidth of
  graph $G$.
\end{definition}
Freuder \cite{Freuder90AA} showed that if the primal graph has bounded treewidth, then the instance can be solved in polynomial time. By now, the result can be obtained by standard dynamic programming techniques on tree decompositions.
\begin{theorem}[Freuder \cite{Freuder90AA}]\label{th:freuder}
For every fixed $k$, a CSP instance $I=(V,D,C)$ can be solved in time $O(|V|\cdot |D|^{k+1})$ if the primal graph has treewidth at most $k$.
\end{theorem}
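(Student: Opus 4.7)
The plan is to prove the theorem by dynamic programming along a tree decomposition. First, I would assume a tree decomposition $(\mathcal{B}, T)$ of the primal graph of width at most $k$ is available (for fixed $k$, one can be computed in time linear in $|V|$, but for this bound it suffices that all bags have size at most $k+1$). I would root $T$ at an arbitrary node $r$, and for each node $t \in V(T)$ let $V_t$ denote the union of the bags $B_{t'}$ over all descendants $t'$ of $t$ (including $t$ itself). A standard preprocessing step is to convert $(\mathcal{B}, T)$ into a \emph{nice} tree decomposition of the same width with $O(|V|)$ nodes, in which every node is of one of four simple types (leaf, introduce, forget, join); this does not change the asymptotic running time and makes the DP transitions easy to analyze.

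The next step is the crucial structural observation: for every constraint $\langle s_i, R_i \rangle$, the variables in $s_i$ form a clique in the primal graph, and a standard property of tree decompositions implies that any such clique is contained in some bag $B_{t_i}$. Thus I can assign every constraint to a bag that fully contains its scope, and consequently every constraint can be ``tested locally'' once we know the assignment on $B_{t_i}$. The DP table at node $t$ is a function $M_t$ indexed by assignments $\alpha : B_t \to D$, where $M_t(\alpha) = \textsf{true}$ iff $\alpha$ extends to an assignment $\hat\alpha : V_t \to D$ satisfying every constraint assigned to a node in the subtree rooted at $t$. Since $|B_t| \le k+1$, the table has at most $|D|^{k+1}$ entries.

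I would then describe the four transitions of the nice decomposition: at a leaf $t$ with $B_t = \emptyset$ we set $M_t(\emptyset) = \textsf{true}$; at an introduce node adding a vertex $v$ we extend each $\alpha$ from the child's bag by each of the $|D|$ values of $v$ and check any newly assignable constraints; at a forget node removing $v$ we take an OR over the $|D|$ choices for $v$; at a join node with two children sharing $B_t$ we AND the two tables entry-wise. Each of these transitions touches $O(|D|^{k+1})$ table entries with $O(1)$ work per entry (assuming constraints are checked in amortized constant time via standard indexing). Since there are $O(|V|)$ nodes in the nice decomposition, the total running time is $O(|V| \cdot |D|^{k+1})$. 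The instance has a solution iff $M_r$ has at least one \textsf{true} entry.

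The main obstacle is bookkeeping rather than conceptual: one must carefully argue that every constraint gets checked exactly once along the DP, that the join node operation respects the invariant (both children have the same bag in the nice form, which is precisely what makes the AND entry-wise), and that the per-entry work at introduce nodes really is $O(1)$ rather than $O(|D|)$ when amortized over the $|D|$ choices of the introduced variable. A pedantic version of this argument requires bounding the total number of constraints handled over the tree by the number of nodes times a constant, which is where the ``nice'' decomposition pays off. With these points addressed the bound $O(|V| \cdot |D|^{k+1})$ follows immediately.
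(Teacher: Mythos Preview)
Your proposal is correct and follows exactly the approach the paper has in mind: the paper does not give a detailed proof of Theorem~\ref{th:freuder} but simply remarks that ``by now, the result can be obtained by standard dynamic programming techniques on tree decompositions,'' which is precisely the nice-tree-decomposition DP you outline. There is nothing to add; your sketch is the standard argument the paper is deferring to.
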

It follows from Theorem~\ref{th:freuder} that $\textup{CSP}(\mathcal{G})$ is polynomial-time solvable if $\mathcal{G}$ has bounded treewidth, and it is easy to find graph classes $\mathcal{G}$ with unbounded treewidth (e.g., cliques) where the problem remains NP-hard. But, surprisingly, there seem to be cases that are neither polynomial-time solvable or NP-hard, thus a full classification into these two categories does not seem to be possible.

In the introduction, we mentioned that typically we expect that the problem at hand can be eventually classified as either polynomial-time solvable or NP-hard. While this may be true in most cases, there is no mathematical reason why this should be true in general. In fact, Ladner's Theorem \cite{DBLP:journals/jacm/Ladner75} states that if $\textup{P}\neq \textup{NP}$, then there are NP-intermediate problems in the class NP: problems that are neither polynomial-time solvable nor NP-hard. The proof of Ladner's Theorem produces NP-intermediate problems that are highly artificial, so it is a different question whether there are natural problems that are NP-intermediate. There are two problems that are often highlighted as natural candidates for being NP-intermediate: \textsc{Graph Isomorphism} and \textsc{Integer Factorization}. These two problems are not expected to be polynomial-time solvable, and the fact that they can be solved much more efficiently than brute force \cite{DBLP:conf/stoc/Babai16,DBLP:conf/stoc/BabaiL83,MR1137100} suggests that they are not NP-hard either.

One could say that the reason why \textsc{Graph Isomorphism} and \textsc{Integer Factorization} are NP-intermediate is that the deep algebraic and number-theoretic structures underlying these problems make them occupy a special place in the complexity landscape of NP problems. However, it is important to point out that problems can be (probably) NP-intermediate for more pedestrian reasons: it is possible to scale down an NP-hard problem in a way that it no longer NP-hard, but still not sufficiently easy to be polynomial-time solvable. We will refer to the following (artificial) example also in later sections.
\begin{definition}\label{def:special}
A graph $G$ is {\em special} if it has exactly two connected components: a clique of size $k$ for some integer $k\ge 1$ and a path of exactly $2^k$ vertices. \textsc{Special CSP} and \textsc{Special Boolean Join Query} are the restricted cases of the general problems where we assume that the primal graph is special.
\end{definition}
Let us give some intuitive arguments why these problems could be NP-intermediate (we will make this more formal in later sections). First, the path part can be solved efficiently in polynomial time. Then we need to solve the clique part, which can certainly be done by brute force in time $O(n^k)$, where $n$ is the total length of the input. But as already the primal graph has size larger than $2^k$, we have $n\ge 2^k$ and hence $k\le \log n$. Thus we can solve the problem in quasipolynomial time $n^{O(\log n)}$, which would be an exceptionally unusual property of an NP-hard problem. Moreover, it is not clear what substantial improvements we can expect on this algorithm: one would need to solve the clique part significantly faster than brute force. Therefore, it seems that these problems variants are likely to be NP-intermediate with best possible running time around $n^{O(\log n)}$.

This example shows that even if we just want to understand which special cases are polynomial-time solvable, then NP-hardness may not be sufficient for this purpose. As we shall see in later sections, we need to use other lower bound techniques for this type of classification. Additionally, these lower bound techniques can provide stronger lower bounds beyond just ruling out polynomial-time algorithms, showing the optimality of certain algorithms in a tighter way.

\section{Parameterized Intractability}
\label{sec:param-intr}

Parameterized complexity considers algorithmic problems where each
input instance has a parameter $k$ associated with it. This parameter
is typically either the size of the solution we are looking for or
some measure of the input, such as the number of variables in a formula, the dimension of the input point
set, the maximum degree of the input graph, or perhaps the alphabet size of the
input strings. The central goal of parameterized complexity is to
develop algorithms that are efficient on instances where the value of
the parameter is small. Formally, we say that a parameterized problem
is {\em fixed-parameter tractable} (FPT) if it can be solved in time
$f(k)\cdot n^{O(1)}$, where $n$ is the size of the input and $f$ is a
computable function depending only on $k$. This form of running time
has to be contrasted with the running time $n^{O(k)}$ of brute force
algorithms that are often easily achievable if $k$ is the size of the
solution we are looking for. Research in the past three decades has
shown that many of the natural NP-hard problems are FPT with various
parameterizations, leading to algorithms that are often highly
nontrivial and combinatorially deep \cite{CyganFKLMPPS15,DBLP:series/txcs/DowneyF13,MR2001b:68042,MR2238686}.

As an example, let us consider \textsc{Vertex Cover}: given a graph $G$ and an integer $k$, the task is to find a vertex cover $S$ of size at most $k$, that is, a set $S$ of at most $k$ vertices such that every edge of $G$ has at least one endpoint in $S$. Clearly, we can solve the problem by brute force on an $n$-vertex graph by trying each of the $O(n^k)$ sets $S$ of size at most $k$. But more efficient algorithms are available: a standard application of the bounded-depth search tree technique already delivers a $2^k\cdot n^{O(1)}$ algorithm, which can be further refined with additional techniques \cite{MR2002h:05149}. This means that \textsc{Vertex Cover} is FPT parameterized by the size of the solution. For \textsc{Clique}, the trivial $O(n^k)$ brute force search can be improved to about $O(n^{\omega k/3})$) (where $\omega<2.3729$ is the exponent for matrix multiplication \cite{DBLP:conf/soda/AlmanW21,nevsetvril1985complexity,nevsetvril1985complexity}), but no FPT algorithm is known despite significant efforts.

Motivated by this apparent difference between \textsc{Vertex Cover} and \textsc{Clique}, Downey and Fellows introduced the notion of W[1]-hardness and the $\textup{FPT}\neq \textup{W[1]}$ hypothesis \cite{MR2001b:68042}. We omit here the technical definitions related to the class W[1]; for the purpose of proving negative evidence, it is sufficient to know that $\textup{FPT}\neq \textup{W[1]}$ is equivalent to the statement ``\textsc{Clique} is not FPT'' (or to 
``\textsc{Independent Set} is not FPT'', as the two problems are equivalent by taking the complement of the graph). To define W[1]-hardness, we need first the following notion of reduction:
\begin{definition}
  Let $P$ and $Q$ be two parameterized problems. A {\em parameterized reduction} transforms an instance $x$ of $P$ with parameter $k$ to an instance $x'$ of $Q$ with parameter $k'$ such that
  \begin{enumerate}
  \item $(x,k)$ is a yes-instance of $P$ if and only if $(x',k')$ is a yes-instance of $Q$.
  \item The running time of the reduction is $f(k)|x|^{O(1)}$ for some computable function $f$.
    \item We have $k'\le f(k)$ for some computable function $f$.
  \end{enumerate}
\end{definition}
The third requirement is what makes this notion very different from usual polynomial-time reductions: we have to pay extra attention not to blow up too much the parameter $k$ in the reduction. Parameterized reductions were designed in a way that they transfer the property of being FPT: it can be shown that if there is a parameterized reduction from $P$ to $Q$ and $Q$ is FPT, then $P$ is FPT as well. We can define W[1]-hardness by saying that a problem $P$ is W[1]-hard if there is a parameterized reduction from \textsc{Clique} to $P$. We can interpret this as evidence that $P$ is not FPT: an FPT algorithm for $P$ would show that \textsc{Clique} is FPT,  violating the $\textup{FPT}\neq \textup{W[1]}$ hypothesis.

Let us have a look at the complexity of CSP via the lens of parameterized complexity. Given a instance $I=(V,D,C)$, we can introduce the number $k=|V|$ of variables as the parameter of the instance. We can decide if there is a solution by trying each of the $|D|^k=O(n^k)$ possible assignments. The NP-hardness of the problem implies that this cannot be improved to $n^{O(1)}$ (assuming $\textup{P}\neq\textup{NP}$), but this does not rule out the possibility that the problem is FPT parameterized by $k$, that is, there is a $f(k)\cdot n^{O(1)}$ time algorithm. Such an algorithm would be certainly of interest in contexts where we can assume that $k$ is small, but $D$ is large (which is typically true in database applications). However, it is easy to see that the problem of finding a clique of size $k$ in a graph $G$ can be expressed as a CSP problem with $k$ variables, $\binom{k}{2}$ constraints, and domain $D=V(G)$. That is, there is a parameterized reduction from \textsc{Clique} to \textsc{CSP} parameterized by the number of variables, showing that the latter problem is unlikely to be FPT either.

We can now return to the question left open at the end of Section~\ref{sec:np-hardness}: what are those classes $\mathcal{G}$ for which $\textsc{CSP}(\mathcal{G})$ is polynomial-time solvable? As we have seen, NP-hardness does not seem to be sufficiently strong to highlight all the negative cases. Instead, let us look at the fixed-parameter tractability of $\textsc{CSP}(\mathcal{G})$, parameterized by the number $k$ of variables. Now if $\textsc{CSP}(\mathcal{G})$ can be proved to be W[1]-hard for some $\mathcal{G}$, then this implies in particular that it is not polynomial-time solvable, assuming $\textup{FPT}\neq \textup{W[1]}$.

For example, let $\mathcal{G}$ contain every special graph, as defined in Definition~\ref{def:special}, and let us consider \textsc{Special CSP} (that is, $\textsc{CSP}(\mathcal{G})$) parameterized by the number of variables. Given an instance of \textsc{Clique} (a graph $G$ where it has to be decided if there is a clique of size $k$), then we can express it as a \textsc{Special CSP} instance as follows. We introduce $k$ variables connected by $\binom{k}{2}$ binary constraints to express the problem of finding a $k$-clique, and additionally we introduce $2^k$ dummy variables connected by constraints forming a path. The primal graph is a $k$-clique plus a path on $2^k$ vertices, as required in \textsc{Special CSP}. The reduction turns the problem of finding a $k$-clique to a \textsc{Special CSP} instance on $f(k)=k+2^k$ variables, hence this is a proper parameterized reduction. It follows that \textsc{Special CSP} is W[1]-hard parameterized by the number of variables, and hence unlikely to be polynomial-time solvable.

More generally, Grohe, Schwentick, and Segoufin \cite{380867} proved that if $\mathcal{G}$ has unbounded treewidth, then $\textsc{CSP}(\mathcal{G})$ is W[1]-hard, leading to a complete classification.
\begin{theorem}[Grohe, Schwentick, and Segoufin~\cite{380867}]\label{th:csptw}
  Let $\mathcal{G}$ be a decidable class of graphs. Assuming $\textup{FPT}\neq \textup{W[1]}$, the following are equivalent:
  \begin{enumerate}
  \item $\mathcal{G}$ has bounded treewidth,
  \item     $\textsc{CSP}(\mathcal{G})$ is polynomial-time solvable,
    \item $\textsc{CSP}(\mathcal{G})$ is FPT parameterized by the number $k$ of variables.
  \end{enumerate}
\end{theorem}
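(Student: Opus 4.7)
The plan is to prove the equivalence via the cycle $(1) \Rightarrow (2) \Rightarrow (3) \Rightarrow (1)$; the first two implications are easy, and all the substance is in the last one.

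For $(1) \Rightarrow (2)$, I would invoke Freuder's Theorem (Theorem~\ref{th:freuder}) directly: if every graph in $\mathcal{G}$ has treewidth at most some constant $w$, then an instance of $\textsc{CSP}(\mathcal{G})$ with variable set $V$ and domain $D$ is solvable in time $O(|V| \cdot |D|^{w+1})$, which is polynomial in the total input size. The implication $(2) \Rightarrow (3)$ is trivial, as any polynomial-time algorithm is a fortiori an FPT algorithm with trivial parameter dependence $f(k) = 1$.

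The substantive direction is $(3) \Rightarrow (1)$, which I would establish in contrapositive form: if $\mathcal{G}$ has unbounded treewidth, then $\textsc{CSP}(\mathcal{G})$, parameterized by the number $k$ of variables, is W[1]-hard, and hence not FPT under the hypothesis $\textup{FPT} \neq \textup{W[1]}$. The strategy is a parameterized reduction from \textsc{Clique}. Given an input $(G, k)$ to Clique, the reduction first locates, using the decidability of $\mathcal{G}$, a graph $H \in \mathcal{G}$ of treewidth exceeding a prescribed function $g(k)$; one enumerates $\mathcal{G}$ in order of size and halts at the first such $H$, which guarantees that $|V(H)|$ is bounded by a computable function of $k$. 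Then the reduction constructs a CSP instance $I$ whose primal graph is exactly $H$, whose domain is essentially $V(G)$ (possibly enriched with auxiliary values), and whose solutions are in bijection with the $k$-cliques of $G$.

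The main obstacle is the construction of $I$, and the standard tool here is the Excluded Grid Theorem of Robertson and Seymour: choosing $g(k)$ large enough guarantees that $H$ contains a $q \times q$ grid as a minor for $q$ growing with $k$, realized by pairwise disjoint connected branch sets $B_{i,j} \subseteq V(H)$ with the prescribed grid adjacencies. The grid minor serves as a skeleton for the encoding: equality constraints along edges internal to each branch set $B_{i,j}$ force all variables there to take a common value, while the grid edges between branch sets are used to propagate choices and to enforce the adjacency conditions that describe a clique in $G$. Edges of $H$ that do not participate in the minor model are equipped with the full (unconstrained) relation, so that the primal graph is exactly $H$ without any additional effective restrictions. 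Because a rectangular grid does not directly encode the pairwise adjacency of a complete graph, in practice one typically reduces from a more ``grid-friendly'' W[1]-hard variant such as partitioned subgraph isomorphism or grid tiling (both parameterized-reducible from \textsc{Clique}), and composes with the step above. Verifying that the final instance has at most $|V(H)| \leq f(k)$ variables, that the reduction runs in FPT time, and that the intended bijection with $k$-cliques of $G$ holds, completes the parameterized reduction and yields W[1]-hardness.
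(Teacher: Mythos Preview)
The paper does not actually prove this theorem; it is stated as a cited result of Grohe, Schwentick, and Segoufin, with only the surrounding discussion indicating that $(1)\Rightarrow(2)$ follows from Freuder's algorithm (Theorem~\ref{th:freuder}) and that the hard direction is ``$\mathcal{G}$ has unbounded treewidth $\Rightarrow$ $\textsc{CSP}(\mathcal{G})$ is W[1]-hard.'' So there is no detailed proof in the paper to compare against.

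That said, your proposal is the standard and correct route, and it matches the architecture of the original proof. The cycle $(1)\Rightarrow(2)\Rightarrow(3)\Rightarrow(1)$ with Freuder for the first step and W[1]-hardness via the Excluded Grid Theorem for the last is exactly right, and your use of decidability of $\mathcal{G}$ to bound $|V(H)|$ computably in $k$ is the correct way to make the reduction a genuine parameterized reduction. One small historical remark: the intermediate problem \textsc{Grid Tiling} postdates the Grohe--Schwentick--Segoufin paper, so their original argument encodes \textsc{Clique} into the grid minor more directly (rows and columns of the grid indexing the $k$ vertices of the putative clique, with adjacency checks at crossings); but composing through \textsc{Grid Tiling} or partitioned subgraph isomorphism, as you suggest, is a perfectly valid and arguably cleaner modern presentation.
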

Observe that there is a major coincidence here: the polynomial-time solvable cases are exactly the same as the FPT cases (which in principle could have been a more general class). This coincidence makes it possible to use W[1]-hardness to identify those cases that are not polynomial-time solvable.

A more general result gives a classification in the framework of homomorphism problem for relational structures (which can be directly translated to results for \textsc{Boolean Join Query}). Let $\tau$ be a vocabulary and let $\mathcal{A}$ be a class of $\tau$-structures. Then $\textsc{HOM}(\mathcal{A},\_)$ is the special case of the general homomorphism problem where given two $\tau$-structures $(\mathbf{A},\mathbf{B})$ with $\mathbf{A}\in\mathcal{A}$ and $\mathbf{B}$ arbitrary, the task is to decide if there is a homomorphism from $\mathbf{A}$ to $\mathbf{B}$. The polynomial-time solvable cases again depend on treewidth, but in a slightly more complicated manner. If $\mathbf{A}'$ is a substructure of $\mathbf{A}$ such that there is a homomorphism from $\mathbf{A}$ to $\mathbf{A'}$, then the problem instances $(\mathbf{A},\mathbf{B})$ and $(\mathbf{A}',\mathbf{B})$ are equivalent. The smallest such substructure $\mathbf{A}'$ of $\mathbf{A}$ is called the {\em core} of $\mathbf{A}$ (it is known to be unique up to isomorphism). It is the treewidth of this core that determines the complexity of the problem.

\begin{theorem}[Grohe~\cite{1206036}]\label{th:homtw}
  Let $\tau$ be a finite vocabulary and let $\mathcal{A}$ be a decidable class of $\tau$-structures. Assuming $\textup{FPT}\neq \textup{W[1]}$, the following are equivalent:
  \begin{enumerate}
  \item the cores of the structures in $\mathbf{A}$ have bounded treewidth,
  \item  $\textsc{HOM}(\mathcal{A},\_)$ is polynomial-time solvable,
    \item$\textsc{HOM}(\mathcal{A},\_)$ is FPT parameterized by the size $k$ of the universe of $\mathbf{A}$.
  \end{enumerate}
\end{theorem}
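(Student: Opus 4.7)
The plan is to establish (2)$\Rightarrow$(3) (trivial, since every polynomial-time algorithm is FPT), together with (1)$\Rightarrow$(2) and (3)$\Rightarrow$(1). The algorithmic side is an adaptation of Freuder-style dynamic programming to the core; the hard side is a parameterized reduction from \textsc{Clique} that exploits the unbounded treewidth of cores.

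For the algorithmic implications (1)$\Rightarrow$(2) and (1)$\Rightarrow$(3), the basic observation is that homomorphisms from $\mathbf{A}$ to $\mathbf{B}$ are in bijection with homomorphisms from the core $\mathbf{A}^{*}$ to $\mathbf{B}$, so it suffices to work with $\mathbf{A}^{*}$. For the FPT direction (1)$\Rightarrow$(3), given input $(\mathbf{A},\mathbf{B})$ with $k=|A|$, I would compute $\mathbf{A}^{*}$ and a tree decomposition of width at most $w$ by brute force in time $f(k)$ (possible because $|\mathbf{A}|$ is bounded in terms of $k$ and the maximum arity of $\tau$), then apply a relational-structure analogue of Theorem~\ref{th:freuder} in time $O(|A^{*}|\cdot |B|^{w+1})$, giving an overall bound of $f(k)\cdot |B|^{O(1)}$. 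For polynomial-time solvability (1)$\Rightarrow$(2), where $|\mathbf{A}|$ is unbounded, explicit core computation is too expensive; instead I would invoke the Dalmau–Kolaitis–Vardi characterization of homomorphic equivalence to bounded-treewidth structures, which ensures that the $w$-consistency procedure correctly decides the existence of a homomorphism $\mathbf{A}\to\mathbf{B}$ in time polynomial in $|\mathbf{A}|+|\mathbf{B}|$ (with exponent depending on $w$) whenever $\mathbf{A}$ is homomorphically equivalent to a structure of treewidth at most $w$.

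The hardness direction (3)$\Rightarrow$(1) is the main obstacle. Assuming the cores of structures in $\mathcal{A}$ have unbounded treewidth, I would construct a parameterized reduction from \textsc{Clique} to $\textsc{HOM}(\mathcal{A},\_)$. Given a graph $G$ and parameter $k$, the goal is to pick some $\mathbf{A}\in\mathcal{A}$ whose core $\mathbf{A}^{*}$ has treewidth sufficiently larger than a carefully chosen function of $k$, and then to build a target $\mathbf{B}=\mathbf{B}(G,k)$ so that homomorphisms $\mathbf{A}\to\mathbf{B}$ are in bijection with $k$-cliques in $G$. The workhorse is an \emph{embedding lemma}: from any core of sufficiently high treewidth one can extract a grid-like combinatorial gadget (via the Excluded Grid Theorem of Robertson and Seymour, or a sharper variant) into which a \textsc{Clique}-instance can be encoded. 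One then engineers $\mathbf{B}$ so that the non-gadget portion of $\mathbf{A}^{*}$ maps trivially while the gadget mirrors the edges of $G$. Two technical difficulties dominate: first, controlling the blow-up in $k'=|A|$ so that the reduction remains parameterized, which demands that the treewidth-to-gadget extraction be quantitatively efficient; and second, ensuring the construction works uniformly for an arbitrary decidable class $\mathcal{A}$ rather than for a concrete family of high-treewidth structures. This is precisely where the argument must upgrade Theorem~\ref{th:csptw}, since the primal graph of a CSP instance is no longer directly available — we must reason about the treewidth of the core through the lens of homomorphic equivalence, and the equivalence of \textsc{HOM}-instances under cores is exactly what makes the passage from primal-graph hardness to core-based hardness both natural and delicate.
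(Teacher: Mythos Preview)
The paper does not give its own proof of this theorem: it is stated as a result of Grohe~\cite{1206036} and used as a black box, with no argument beyond the remark that the polynomial-time and FPT cases coincide. There is therefore nothing in the paper to compare your proposal against.

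That said, your sketch is a reasonable outline of Grohe's original argument. The cycle $(1)\Rightarrow(2)\Rightarrow(3)\Rightarrow(1)$ is the right structure, the use of Dalmau--Kolaitis--Vardi for $(1)\Rightarrow(2)$ is the standard route, and the hardness direction via the Excluded Grid Theorem on the core is indeed the crux of Grohe's proof. Your identification of the two main technical difficulties---controlling the parameter blow-up and handling an arbitrary decidable $\mathcal{A}$---is accurate. One point to be careful about: in $(1)\Rightarrow(3)$ you say homomorphisms from $\mathbf{A}$ to $\mathbf{B}$ are ``in bijection'' with homomorphisms from the core $\mathbf{A}^{*}$ to $\mathbf{B}$; this is not literally true (composition with the retraction is many-to-one), though the existential statements are equivalent, which is all you need. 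Also, in the hardness direction you write that homomorphisms $\mathbf{A}\to\mathbf{B}$ should be ``in bijection with $k$-cliques in $G$''; again, existence-equivalence is what the reduction actually achieves and all that is required.
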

Again, we have that the polynomial-time solvable and FPT cases coincide, allowing the use of W[1]-hardness for the classification of both properties.


\section{The Exponential-Time Hypothesis}\label{sec:expon-time-hypoth}
Parameterized complexity gives a finer understanding of the complexity of problems: for example, the negative results not only tell us that \textsc{Clique} is not polynomial-time solvable, but they rule out algorithms with running time $f(k)n^{O(1)}$. However, this is still a qualitative result that rules out a certain running time, but does not tell us the exact complexity of the problem: can the $n^k$ brute force search be improved to, say, $n^{\sqrt{k}}$, or to $n^{O(\log k)}$, or to $n^{O(\log\log\log\log k)}$, or to $\ldots$? For all we know, such algorithms cannot be ruled out based on the $\textup{P}\neq \textup{NP}$ or $\textup{FPT}\neq\textup{W[1]}$ conjectures. Similarly, for FPT problems such as \textsc{Vertex Cover} where the best known algorithms have running time of the form $2^{O(k)}\cdot n^{O(1)}$, we cannot rule out that these algorithms can be significantly improved to, say, $2^{O(\sqrt{k})}\cdot n^{O(1)}$.

The Exponential-Time Hypothesis (ETH), formulated by Impagliazzo, Paturi, and Zane \cite{ImpagliazzoP01,MR1894519}, makes the assumption $\textup{P}\neq \textup{NP}$ more quantitative: informally, it not only tells us that NP-hard problems do not have polynomial-time algorithms, but it postulates that NP-hard problems really require exponential time and cannot be solved in subexponential time. The formal statement of the ETH is somewhat technical and for most applications it is more convenient to use the following assumption instead, which is an easy consequence of the ETH:

\begin{hypothesis}[Consequence of the ETH, Impagliazzo, Paturi, and Zane \cite{ImpagliazzoP01,MR1894519}]\label{ass:eth}
\textsc{3SAT} with $n$ variables cannot be solved in time $2^{o(n)}$.
\end{hypothesis}
\textsc{3SAT} is the fundamental satisfiability problem
where, given a Boolean formula in conjunctive normal form with at most
3 literals in each clause (e.g.,
$(x_1 \vee \bar x_3 \vee x_5) \wedge (\bar x_1 \vee x_2 \vee x_3)
\wedge (\bar x_2 \vee x_3 \vee x_4)$), the task is to decide whether a
satisfying assignment exists.  For completeness, let us recall the
formal statement of the ETH, of which Hypothesis~\ref{ass:eth} is an
easy consequence.  Let $s_k$ be the infinum of all real numbers
$\delta$ for which there exists an $O(2^{\delta n})$ time algorithm
for \textsc{$k$-SAT}. Then the ETH is the assumption that $s_k>0$ for
every $k\ge 3$.  It is easy to show that this assumption implies
Hypothesis~\ref{ass:eth}, hence if we can show that some statement
would refute Hypothesis~\ref{ass:eth}, then it would refute the ETH as
well.

As \textsc{3SAT} can interpreted as a special case of CSP with domain size 2 and constraints of arity 3, we can translate Hypothesis~\ref{ass:eth} into the language of CSPs to obtain a lower bound for solving instances with constant domain size.
\begin{corollary}
Assuming the ETH, there is no algorithm that solves every CSP instance $I=(V,D,C)$ in time $2^{o(|V|)}\cdot n^{O(1)}$, even if $|D|=2$ and every constraint has arity at most 3.
\end{corollary}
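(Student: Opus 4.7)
The plan is to derive the corollary directly from Hypothesis~\ref{ass:eth} by exhibiting a trivial, parameter-preserving encoding of \textsc{3SAT} as a CSP instance meeting the stated restrictions. Because Hypothesis~\ref{ass:eth} already rules out $2^{o(n)}$ algorithms for \textsc{3SAT} with $n$ variables, it suffices to show that any hypothetical $2^{o(|V|)}\cdot n^{O(1)}$ algorithm for such restricted CSPs would transport back to a $2^{o(n)}$ algorithm for \textsc{3SAT}, contradicting the hypothesis.

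Concretely, given a \textsc{3SAT} instance $\varphi$ on $n$ Boolean variables $x_1,\dots,x_n$ with $m$ clauses, I would build a CSP instance $I=(V,D,C)$ by setting $V=\{x_1,\dots,x_n\}$, $D=\{0,1\}$, and, for each clause $\ell_{i_1}\vee \ell_{i_2}\vee \ell_{i_3}$ with literals over variables $x_{i_1},x_{i_2},x_{i_3}$, introducing a constraint $\langle (x_{i_1},x_{i_2},x_{i_3}),R\rangle$ where $R\subseteq\{0,1\}^3$ is the relation containing exactly the $7$ triples of truth values that satisfy the clause (clauses with fewer than three distinct literals are handled by a constraint of arity $1$ or $2$ of the obvious form). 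Then $|V|=n$, $|D|=2$, every constraint has arity at most $3$, and $I$ has a solution if and only if $\varphi$ is satisfiable. The encoding runs in polynomial time and the size of $I$ is $O(n+m)$.

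If a hypothetical algorithm solved every such CSP instance in time $2^{o(|V|)}\cdot n^{O(1)}$, applying it to $I$ would give a $2^{o(n)}\cdot (n+m)^{O(1)}$-time algorithm for \textsc{3SAT}. Since by the Sparsification Lemma (or simply by the formulation of Hypothesis~\ref{ass:eth}, which already absorbs the polynomial factor in the instance size) this refutes Hypothesis~\ref{ass:eth}, and hence the ETH, the corollary follows.

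The only mild subtlety — and really the only place anything could go wrong — is the parameter bookkeeping: the lower bound is phrased in $|V|$ rather than in the full instance size, so one must verify that $|V|=n$ exactly in the reduction and that the polynomial factor in the CSP running time stays $n^{O(1)}$ rather than being charged to the exponent. Both are immediate from the encoding above, so there is no substantive obstacle; the result is essentially a reformulation of Hypothesis~\ref{ass:eth} in CSP language.
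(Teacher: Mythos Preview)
Your proposal is correct and matches the paper's approach exactly: the paper simply observes in one sentence that \textsc{3SAT} is a special case of CSP with $|D|=2$ and ternary constraints, so Hypothesis~\ref{ass:eth} translates directly. Your slightly more detailed encoding and the remark that the polynomial factor is absorbed (so the Sparsification Lemma is not needed here) are both fine and in line with the paper's intent.
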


Hypothesis~\ref{ass:eth} rules out the existence of algorithms that are subexponential in the number $n$ of variables. But the number $m$ of clauses in a \textsc{3SAT} instance can be up to cubic in the number of variables, thus the length of the instance can be much larger than $O(n)$. Therefore, Hypothesis~\ref{ass:eth} does not rule out the existence of algorithms that are subexponential in the length of the instance: it could be potentially the case that all the really hard instances of \textsc{3SAT} have, say, $\Omega(n^2)$ clauses, hence a $2^{o(\sqrt{n+m})}$ algorithm would be still compatible with Hypothesis~\ref{ass:eth}. Impagliazzo, Paturi and Zane \cite{MR1894519} showed that this is not the case: the Sparsification Lemma implies that, for the purposes of Hypothesis~\ref{ass:eth}, \textsc{3SAT} remains hard already when restricted to instances with a linear number of clauses. With the Sparsification Lemma, the following stronger assumption follows from Hypothesis~\ref{ass:eth}:
\begin{hypothesis}[Consequence of the ETH + Sparsification Lemma, Impagliazzo, Paturi, and Zane \cite{MR1894519}]\label{ass:eth2}
\textsc{3SAT} with $n$ variables and $m$ clauses cannot be solved in time $2^{o(n+m)}$.
\end{hypothesis}
This stronger assumption turns out to be very useful to prove lower
bounds for other problems. Reductions from \textsc{3SAT} to other
problems typically create instances whose size depends not only on the
number $n$ of variables, but also on the number $m$ of clauses, hence
it is important to have lower bounds on \textsc{3SAT} in terms of both
$n$ and $m$. For example, if we look at textbook reductions from \textsc{3SAT} to \textsc{3-Coloring}, then they transform a formula with $n$ variables and $m$ clauses into a graph with $O(n+m)$ vertices and $O(n+m)$ edges. Such a reduction together with Hypothesis~\ref{ass:eth2} implies a lower bound for binary CSP over a constant domain size. 
\begin{corollary}
Assuming the ETH, there is no algorithm that solves every CSP instance $I=(V,D,C)$ in time $2^{o(|V|+|C|)}\cdot n^{O(1)}$, even if $|D|=3$ and every constraint is binary.
\end{corollary}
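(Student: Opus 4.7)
The plan is to derive the corollary by chaining the standard reduction from \textsc{3SAT} to \textsc{3-Coloring} with a trivial encoding of \textsc{3-Coloring} as a binary CSP of domain size $3$, and then invoking Hypothesis~\ref{ass:eth2}.

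First, I would recall the classical textbook reduction from \textsc{3SAT} to \textsc{3-Coloring}: given a \textsc{3SAT} formula with $n$ variables and $m$ clauses, one constructs a graph $G$ using a constant-size variable gadget for each variable (so the two literals get complementary colors) and a constant-size clause gadget for each clause (forcing at least one literal to be true), all tied together through a fixed ``palette'' triangle. The construction is completely local: each variable contributes $O(1)$ vertices and edges, each clause contributes $O(1)$ vertices and edges, and $G$ is $3$-colorable if and only if the formula is satisfiable. Hence $|V(G)|, |E(G)| = O(n+m)$, and the reduction runs in polynomial time.

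Next I would encode \textsc{3-Coloring} on $G$ as a binary CSP instance $I=(V,D,C)$ in the obvious way: let $V=V(G)$, let $D=\{1,2,3\}$ (so $|D|=3$), and for each edge $\{u,v\}\in E(G)$ introduce a binary constraint $\langle(u,v),R_{\neq}\rangle$ with $R_{\neq}=\{(a,b)\in D^2 : a\neq b\}$. Proper $3$-colorings of $G$ correspond bijectively to satisfying assignments of $I$, and we have $|V|=|V(G)|=O(n+m)$ and $|C|=|E(G)|=O(n+m)$, with every constraint binary and $|D|=3$ as required.

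Finally, suppose for contradiction that some algorithm $\mathcal{A}$ solves every binary CSP with $|D|=3$ in time $2^{o(|V|+|C|)}\cdot n^{O(1)}$. Given a \textsc{3SAT} instance with $n$ variables and $m$ clauses, we build the graph $G$ in polynomial time and then the CSP instance $I$ in polynomial time, obtaining $|V|+|C|=O(n+m)$. Running $\mathcal{A}$ on $I$ then decides satisfiability of the original formula in time $2^{o(n+m)}\cdot (n+m)^{O(1)} = 2^{o(n+m)}$, contradicting Hypothesis~\ref{ass:eth2}. No step here is technically delicate; the only thing to be careful about is that the reduction to \textsc{3-Coloring} is genuinely linear in both parameters (not, say, quadratic in $n$ from naively encoding each clause across all variables), which is exactly why Hypothesis~\ref{ass:eth2} with the Sparsification Lemma -- rather than Hypothesis~\ref{ass:eth} -- is needed to conclude the bound in terms of $|V|+|C|$ rather than merely $|V|$.
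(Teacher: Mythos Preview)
Your proposal is correct and matches the paper's own argument essentially line for line: the paper likewise invokes the textbook linear-size reduction from \textsc{3SAT} to \textsc{3-Coloring}, views \textsc{3-Coloring} as a binary CSP with $|D|=3$, and appeals to Hypothesis~\ref{ass:eth2} (ETH plus the Sparsification Lemma) to get the $2^{o(|V|+|C|)}$ lower bound. Your closing remark about why Hypothesis~\ref{ass:eth2} rather than Hypothesis~\ref{ass:eth} is needed is exactly the point the paper is making in the surrounding discussion.
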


Let us turn our attention now to parameterized problems. A key result in parameterized complexity states that, assuming ETH, the $n^k$ brute force search for \textsc{Clique} cannot be improved better than a constant factor in the exponent, even if we allow an arbitrary $f(k)$ factor in the running time.
\begin{theorem}[Chen et al.~\cite{MR2121603}]\label{th:chenclique}
  Assuming ETH, \textsc{Clique} cannot be solved in time $f(k)\cdot n^{o(k)}$ for any computable function $f$.
\end{theorem}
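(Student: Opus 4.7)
The plan is to derive a contradiction with Hypothesis~\ref{ass:eth2} by reducing \textsc{3SAT} to \textsc{Clique} via a parameter-efficient ``grouping'' reduction. Starting from a \textsc{3SAT} instance $\phi$ with $N$ variables and $M=O(N)$ clauses (sparsification gives us the linear $M$), I fix an integer parameter $k$ to be chosen later and partition the variables into $k$ blocks $V_1,\dots,V_k$ of size at most $\lceil N/k\rceil$ each. The key idea is to trade a factor $2^{N/k}$ in the number of vertices against a factor $k$ in the parameter.

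The graph $G$ is built as follows. For each block $V_i$ and each truth assignment $\alpha_i\colon V_i\to\{0,1\}$ that satisfies every clause of $\phi$ whose variables lie entirely in $V_i$, introduce a vertex $v_{i,\alpha_i}$. Put an edge between $v_{i,\alpha_i}$ and $v_{j,\alpha_j}$ whenever $i\neq j$ and the combined assignment $\alpha_i\cup\alpha_j$ does not falsify any clause with variables in $V_i\cup V_j$. A routine argument shows that $\phi$ is satisfiable if and only if $G$ contains a clique of size $k$: a $k$-clique must contain one vertex from each block (since vertices from the same block are non-adjacent), the compatibility conditions on edges force the pieces to glue into a global assignment, and every clause, having at most three variables, is witnessed on some pair of blocks. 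The graph has $n\le k\cdot 2^{\lceil N/k\rceil}$ vertices and can be built in time polynomial in $n$.

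Now suppose for contradiction that \textsc{Clique} can be solved in $f(k)\cdot n^{g(k)}$ for some computable $f$ and some $g(k)=o(k)$. Plugging in the reduction yields an algorithm for $\phi$ running in time
\[
f(k)\cdot\bigl(k\cdot 2^{\lceil N/k\rceil}\bigr)^{g(k)}
\;\le\; f(k)\cdot k^{g(k)}\cdot 2^{g(k)\cdot(N/k+1)}.
\]
Given any $\varepsilon>0$, since $g(k)/k\to 0$ we can pick $k=k(\varepsilon)$ (a constant depending only on $\varepsilon$) so that $g(k)/k<\varepsilon$. Then the dominant exponential factor is at most $2^{\varepsilon N+g(k)}$, while $f(k)\cdot k^{g(k)}$ is a constant depending only on $\varepsilon$. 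Hence for every $\varepsilon>0$ we obtain a $2^{\varepsilon N}$-time algorithm for \textsc{3SAT} with linearly many clauses, which in particular gives a $2^{o(N+M)}$ algorithm and contradicts Hypothesis~\ref{ass:eth2}.

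The main obstacle I expect is the quantifier manipulation in the last step: the function $g$ witnessing $o(k)$ depends on the hypothetical algorithm, so $k$ cannot simply be taken as a function of $N$ in a single pass. The clean way around it is to treat $\varepsilon$ as the free parameter, choose $k$ as a function of $\varepsilon$ only (so that $f(k)\cdot k^{g(k)}$ becomes an $\varepsilon$-dependent constant), and obtain the $2^{o(N)}$ bound one $\varepsilon$ at a time. A secondary technical point is getting $M=O(N)$; this is exactly what the Sparsification Lemma version of ETH (Hypothesis~\ref{ass:eth2}) is designed to supply, so it is invoked at the outset.
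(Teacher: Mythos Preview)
The paper does not prove Theorem~\ref{th:chenclique}; it merely quotes the result from Chen et al.\ and uses it as a black box. So there is no ``paper's own proof'' to compare against. Your overall strategy---group the input into $k$ buckets, build a graph with roughly $2^{O(N/k)}$ vertices per bucket, and argue that an $f(k)\cdot n^{o(k)}$ algorithm for \textsc{Clique} would give a $2^{o(N)}$ algorithm for \textsc{3SAT}---is exactly the standard route, and your handling of the quantifiers in the final step (fix $\varepsilon$, then pick a constant $k$) is correct.

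There is, however, a real gap in the correctness of your reduction. You partition the \emph{variables} into blocks and put an edge between $v_{i,\alpha_i}$ and $v_{j,\alpha_j}$ when $\alpha_i\cup\alpha_j$ does not falsify any clause whose variables lie in $V_i\cup V_j$. You then claim that ``every clause, having at most three variables, is witnessed on some pair of blocks.'' This is false: a clause whose three variables land in three \emph{different} blocks $V_i,V_j,V_\ell$ is not contained in any $V_i\cup V_j$, so your edge relation never checks it. Concretely, if a global assignment sets all three literals of such a clause to false, the corresponding three vertices are still pairwise adjacent under your definition, so you get a $k$-clique without a satisfying assignment. The ``routine argument'' breaks here.

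The fix is easy and does not change the rest of the argument. Either (i) first pass through a problem with binary constraints (e.g.\ the textbook reduction \textsc{3SAT}\,$\to$\,\textsc{3-Coloring} gives a graph on $O(N+M)=O(N)$ vertices, and then partitioning \emph{vertices} into $k$ groups with edges for consistent proper colorings is genuinely pairwise), or (ii) partition the \emph{clauses} into $k$ groups of $O(N/k)$ clauses each (using $M=O(N)$ from sparsification), let a vertex for group $i$ be an assignment to the $O(N/k)$ variables appearing in those clauses that satisfies them all, and join two vertices when the partial assignments agree on shared variables. In both variants the vertex count is still $k\cdot 2^{O(N/k)}$, and the pairwise checks now suffice, so your exponent arithmetic goes through unchanged.
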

The same is true for the \textsc{Partitioned Clique}, which, as we have seen in Section~\ref{sec:graph-problems}, is essentially equivalent to a binary CSP instance where the primal graph is clique. Therefore, we can translate Theorem~\ref{th:chenclique} into the language of CSPs.
\begin{theorem}\label{th:cspclique}
  Assuming ETH, there is no algorithm that solves every binary CSP instance $I=(V,D,C)$ in time $f(|V|)\cdot |D|^{o(|V|)|}\cdot n^{O(1)}$, where $f$ is an arbitrary computable function.
\end{theorem}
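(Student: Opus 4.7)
The plan is to derive Theorem~\ref{th:cspclique} from Theorem~\ref{th:chenclique} by a short chain of parameterized reductions, passing through \textsc{Partitioned Clique} and then using the translation of binary partitioned subgraph isomorphism into binary CSP sketched in Section~\ref{sec:graph-problems}. Throughout, the key is to keep the CSP parameter $|V|$ equal to the clique size $k$, so that an $f(|V|)\cdot|D|^{o(|V|)}\cdot n^{O(1)}$ algorithm pulls back to an $f(k)\cdot n^{o(k)}$ algorithm for \textsc{Clique}.

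The first step is a standard parameterized reduction from \textsc{Clique} to \textsc{Partitioned Clique}: given $(G,k)$, form a graph $G'$ on vertex set $V_1\cup\cdots\cup V_k$, where each $V_i$ is a disjoint copy of $V(G)$, and for each $i\neq j$ and each edge $uv\in E(G)$ add an edge between the copy of $u$ in $V_i$ and the copy of $v$ in $V_j$. Then $G'$ contains a $k$-clique respecting the partition $\mathcal{P}=(V_1,\ldots,V_k)$ if and only if $G$ contains a $k$-clique. Since $|V(G')|=k|V(G)|$, any $f(k)\cdot |V(G')|^{o(k)}$ algorithm for \textsc{Partitioned Clique} would yield an $f(k)\cdot n^{o(k)}$ algorithm for \textsc{Clique}, contradicting Theorem~\ref{th:chenclique}.

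The second step uses the construction from Section~\ref{sec:graph-problems} in the reverse direction. Given the partitioned instance $(G',\mathcal{P})$, define a CSP instance $I=(V,D,C)$ with variable set $V=\{v_1,\ldots,v_k\}$ (one per class), domain $D=V(G')$, and for each pair $1\le i<j\le k$ a binary constraint on $(v_i,v_j)$ whose relation consists of all pairs $(u,w)$ with $u\in V_i$, $w\in V_j$, and $uw\in E(G')$. Assignments $f:V\to D$ satisfying $I$ correspond bijectively to partitioned $k$-cliques in $G'$. Crucially, $|V|=k$ while $|D|=|V(G')|=kn$, and the total encoding size is polynomial in $n+k$. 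A hypothetical algorithm running in time $f(|V|)\cdot|D|^{o(|V|)}\cdot n^{O(1)}$ would then solve this instance in time $f(k)\cdot(kn)^{o(k)}\cdot\mathrm{poly}(n,k)=f'(k)\cdot n^{o(k)}$, again contradicting Theorem~\ref{th:chenclique}.

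The argument has no real mathematical obstacle; it is essentially the ``unpacking'' of Section~\ref{sec:graph-problems} composed with a classical reduction. The only point requiring care is bookkeeping of the parameter: one must ensure that the number of variables in the final CSP instance stays equal to $k$ (rather than growing with $n$), since it is this number that sits in the exponent of $|D|$ in the lower bound statement. Once this is observed, the contradiction with Theorem~\ref{th:chenclique} is immediate.
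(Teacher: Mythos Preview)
Your proposal is correct and follows exactly the approach the paper indicates: it unpacks the one-line justification given there, namely that \textsc{Partitioned Clique} is equivalent to a binary CSP whose primal graph is a $k$-clique, so Theorem~\ref{th:chenclique} transfers directly. The bookkeeping you highlight (keeping $|V|=k$ and absorbing the $k^{o(k)}$ factor into $f'(k)$) is precisely what is needed to make the translation rigorous.
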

Moreover, we have seen in Section~\ref{sec:param-intr} that \textsc{$k$-Clique} can be reduced to a \textsc{Special CSP} instance with $k+2^k$ variables. Together with Theorem~\ref{th:chenclique} it follows that, assuming the ETH, there is no $f(|V|)n^{o(\log |V|)}$ time algorithm for \textsc{Special CSP}. This makes the NP-intermediate status of the problem very precise: it is indeed $n^{O(\log |V|)}$ the best possible running time we can hope for this problem.

   Note that the treewidth of a $k$-clique is $k-1$. Therefore, Theorem~\ref{th:cspclique} shows that if $k$ is the treewidth of the primal graph of the CSP instance, then the $n^{O(k)}$ time algorithm of Freuder~\cite{Freuder90AA} is essentially optimal in the sense that the exponent cannot be improved by more than a constant factor.

   \begin{theorem}\label{th:csptweth}
   Assuming ETH, there is no algorithm that solves every binary CSP instance $I=(V,D,C)$ in time $f(|V|)\cdot  n^{o(k)}$, where $k$ is the treewidth of the primal graph and $f$ is an arbitrary computable function $f$.
   \end{theorem}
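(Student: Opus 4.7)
The plan is to derive Theorem~\ref{th:csptweth} from Theorem~\ref{th:cspclique} by exhibiting hard binary CSP instances whose primal graph has treewidth essentially equal to the number of variables. The key observation is that a clique on $t$ vertices has treewidth $t-1$, so any algorithm parameterized by treewidth must in particular handle the instances produced by the \textsc{Partitioned Clique} reduction of Section~\ref{sec:graph-problems}.

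First, I would argue by contradiction: suppose some algorithm $\mathcal{A}$ solves every binary CSP instance $I=(V,D,C)$ in time $f(|V|)\cdot n^{o(k)}$, where $k=\textup{tw}(G_I)$ and $G_I$ is the primal graph of $I$, for some computable $f$. Consider an arbitrary binary CSP instance $I=(V,D,C)$ whose primal graph is the complete graph $K_{|V|}$; such instances are precisely those produced by the translation of Theorem~\ref{th:cspclique}, and, as recalled, $\textup{tw}(K_{|V|}) = |V|-1$. Feeding such an instance into $\mathcal{A}$ yields a running time of $f(|V|)\cdot n^{o(|V|-1)} = f(|V|)\cdot n^{o(|V|)}$.

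Next, I would translate this bound from $n$ back to $|D|$ in order to match the form required by Theorem~\ref{th:cspclique}. The total input size of a binary CSP instance satisfies $n \le p(|V|)\cdot|D|^2$ for some polynomial $p$ (each of the $O(|V|^2)$ constraints is a binary relation over $D$, hence of description size $O(|D|^2\log|D|)$). Therefore
\[
n^{o(|V|)} \;\le\; \bigl(p(|V|)\cdot|D|^{2}\bigr)^{o(|V|)} \;=\; p(|V|)^{o(|V|)}\cdot |D|^{o(|V|)}.
\]
Since $p(|V|)^{o(|V|)}$ depends only on $|V|$, it can be absorbed into a new computable function $f'(|V|)$. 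This yields a running time $f'(|V|)\cdot |D|^{o(|V|)}\cdot n^{O(1)}$ on every binary CSP instance whose primal graph is a clique, directly contradicting Theorem~\ref{th:cspclique}.

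There is no serious obstacle in this argument: everything reduces to the well-known identity $\textup{tw}(K_t) = t-1$ together with the routine bookkeeping needed to convert an $n^{o(k)}$-style bound into the $|D|^{o(|V|)}\cdot n^{O(1)}$ form of Theorem~\ref{th:cspclique}. The only mildly subtle point is ensuring that the polynomial overhead relating $n$ and $|D|$ contributes only a factor of the form $g(|V|)$, which is absorbed into the $f(|V|)$ prefactor; this is why it is essential that Theorem~\ref{th:cspclique} allows an arbitrary computable $f$ in front, and why the statement of Theorem~\ref{th:csptweth} does the same.
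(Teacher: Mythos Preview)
Your proposal is correct and follows exactly the paper's approach: the paper derives Theorem~\ref{th:csptweth} from Theorem~\ref{th:cspclique} via the single observation that the treewidth of a $k$-clique is $k-1$. You have simply made explicit the routine conversion from an $n^{o(|V|)}$ bound to a $|D|^{o(|V|)}\cdot n^{O(1)}$ bound, which the paper leaves to the reader.
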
  

  One could interpret Theorem~\ref{th:csptweth} as saying that the treewidth-based algorithm of Freuder is an optimal way of solving CSP instances. However, this interpretation is misleading. What Theorem~\ref{th:csptweth} really says is that there is one type of primal graphs, namely cliques, where the $n^{O(k)}$ running time that follows using that treewidth-based algorithm is essentially optimal. This {\em does not} rule out the possibility that there are some graph classes, maybe planar graphs, bounded-degree graphs, interval graphs, etc. where it is possible to solve the problem $n^{o(k)}$ time, where $k$ is the treewidth of the primal graph. Formally, we can approach this possibility in the spirit of Theorem~\ref{th:csptw}, by considering the problem $\textsc{CSP}(\mathcal{G})$, where the primal graph is restricted some class $\mathcal{G}$. The following lower bound shows that the treewidth-based algorithm is still optimal for any such $\textsc{CSP}(\mathcal{G})$, up to a logarithmic factor in the exponent.
  \begin{theorem}[\cite{marx-toc-treewidth}]\label{th:beattw}
    Let $\mathcal{G}$ be a class of graphs with unbounded treewidth. Assuming ETH, there is no algorithm that solves every instance $I=(V,D,C)$ of $\textsc{CSP}(\mathcal{G})$ in time $f(|V|)n^{o(k/\log k)}$, where $k$ is the treewidth of the primal graph and $f$ is an arbitrary computable function.
  \end{theorem}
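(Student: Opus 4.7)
The plan is to reduce from the clique‑primal version of CSP, for which Theorem~\ref{th:cspclique} rules out running time $f(t)\cdot n^{o(t)}$ on binary CSP instances whose primal graph is $K_t$ and whose domain has size $n$. The idea is to simulate such a $K_t$‑CSP by a $\textsc{CSP}(\mathcal{G})$ instance whose primal graph $H\in\mathcal{G}$ has treewidth $k=\textup{tw}(H)=O(t\log t)$. A hypothetical $f(|V|)\cdot n^{o(k/\log k)}$ algorithm for $\textsc{CSP}(\mathcal{G})$ would then solve the original $K_t$‑CSP in $f'(t)\cdot n^{o(t\log t/\log(t\log t))}=f'(t)\cdot n^{o(t)}$ time, contradicting Theorem~\ref{th:cspclique} and hence the ETH.

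Since $\mathcal{G}$ has unbounded treewidth, a graph $H\in\mathcal{G}$ of arbitrarily large treewidth is always available. The central technical ingredient is an \emph{embedding lemma}: every graph $H$ of treewidth $w$ contains $K_t$ as a minor for some $t=\Omega(w/\log w)$, realized by pairwise vertex‑disjoint, connected \emph{branch sets} $\phi(1),\dots,\phi(t)\subseteq V(H)$ such that for every $i\neq j$ at least one edge of $H$ joins $\phi(i)$ to $\phi(j)$. This is the main obstacle: the classical excluded‑grid theorem only yields a clique minor of size polylogarithmic in $w$, so to get the sharp $\Omega(w/\log w)$ bound one must argue directly from properties of balanced separators in treewidth‑$w$ graphs, iteratively growing connected ``blobs'' along highly‑branching substructures while carefully accounting for how often each vertex can be reused. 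It is precisely this $\log w$ loss in the embedding that resurfaces as the $\log k$ factor in the theorem statement.

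Given the embedding, the reduction is routine. Take a partitioned $K_t$‑CSP instance with variables $x_1,\dots,x_t$ and binary constraints $R_{ij}$ for each pair, each on a domain of size $n$. Build a CSP instance on variable set $V(H)$ with the same domain: for every edge of $H$ whose endpoints lie in the same branch set $\phi(i)$, impose an equality constraint (so, by connectedness, each $\phi(i)$ is forced to be assigned a single value, which we interpret as the value of $x_i$); for every pair $i<j$, pick one $H$‑edge connecting $\phi(i)$ to $\phi(j)$ and place $R_{ij}$ on it; on the remaining edges of $H$ place trivial all‑accepting binary constraints so that the primal graph is exactly $H\in\mathcal{G}$. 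Solutions of the resulting $\textsc{CSP}(\mathcal{G})$ instance are in bijection with solutions of the $K_t$‑CSP, while $|V(H)|$ depends only on $w$ (hence only on $t$) and the domain size stays $n$.

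Putting everything together, if $\textsc{CSP}(\mathcal{G})$ admitted an $f(|V|)\cdot n^{o(k/\log k)}$‑time algorithm with $k$ the treewidth of the primal graph, running it on the constructed instance would solve the $K_t$‑CSP in $f'(t)\cdot n^{o(t)}$ time, contradicting Theorem~\ref{th:cspclique}. The crux is the embedding lemma with its tight $\Omega(w/\log w)$ ratio; the rest of the proof is a straightforward translation between clique‑CSP and $\textsc{CSP}(\mathcal{G})$.
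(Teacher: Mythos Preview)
Your proposed embedding lemma is false, and this breaks the whole argument. You claim that every graph $H$ of treewidth $w$ contains a $K_t$ \emph{minor} with $t=\Omega(w/\log w)$, i.e., with pairwise vertex-disjoint connected branch sets. But the $n\times n$ grid has treewidth $n$ and is planar, hence has no $K_5$ minor at all. So if $\mathcal{G}$ is the class of square grids (which has unbounded treewidth), your reduction produces nothing beyond $t\le 4$. More generally, bounded-degree expanders have treewidth $\Omega(|V|)$ but Hadwiger number only $O(\sqrt{|V|})$; large treewidth simply does not force a large clique minor, let alone one of size $\Omega(w/\log w)$.

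The actual proof in the cited paper replaces minors by a weaker notion of embedding in which the branch sets $\phi(1),\dots,\phi(t)$ are still connected and pairwise touching, but are \emph{allowed to overlap}; the key parameter is the \emph{depth} $q=\max_{u\in V(H)}|\{i:u\in\phi(i)\}|$. The combinatorial core is an embedding lemma showing that any sufficiently sparse graph (in particular $K_t$ for the right $t$) embeds into every graph of treewidth $k$ with small depth; proving this uses flow/linkedness properties of high-treewidth graphs rather than minor extraction. The reduction must then compensate for the overlap: a vertex $u$ of $H$ that lies in $q$ branch sets is given domain $D^{q}$ so that it can simultaneously carry the values of all $q$ clique variables passing through it, and the binary constraints on edges of $H$ enforce both consistency within each branch set and the original $R_{ij}$ across touching pairs. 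The domain therefore blows up from $n$ to $n^{q}$, and it is precisely the trade-off between $t$, $k$, and the achievable depth $q$ in the embedding lemma that produces the $k/\log k$ exponent. Your ``routine'' reduction with equality constraints only works when $q=1$, i.e., for genuine minors, which is exactly the case that need not exist.
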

  The formulation of Theorem~\ref{th:beattw} was originally chosen in a way to be analogous to the formulation of Theorem~\ref{th:csptw}. However, it later turned out that it is possible to state it in a slightly more robust and expressive way that shows the precise complexity of individual primal graphs.
\begin{theorem}[\cite{jacm-genus}]\label{T:beattreewidth}
    Assuming the ETH, there exists a universal constant $\alpha>0$ such that for any fixed primal graph $G$ with treewidth $k\geq 2$, there is no algorithm deciding the binary CSP instances $I=(V,D,C)$ whose primal graph is $G$ in time $O(|D|^{\alpha \cdot k/\log k})$.
\end{theorem}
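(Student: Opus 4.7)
The plan is to prove the theorem by a parameterized reduction from $\textsc{Clique}$ (equivalently, $\textsc{Partitioned Clique}$) to the binary CSP problem on primal graph $G$, choosing the universal constant $\alpha$ small enough that any hypothetical $O(|D|^{\alpha k/\log k})$ algorithm would yield a $\textsc{Clique}$ algorithm refuting Theorem~\ref{th:chenclique}. Since $G$ is fixed in the statement, the reduction should be entirely combinatorial on $G$, and only the domain $D$ (not the number of variables) will depend on the input size of the $\textsc{Clique}$ instance.

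The structural heart of the argument is an embedding lemma of the following flavor: every graph $G$ of treewidth $k \ge 2$ admits, for some $q = \Omega(k/\log k)$, an embedding of a clique-like host structure $K_q$ into $G$ with uniformly bounded congestion (equivalently, a concurrent multicommodity flow routing all $\binom{q}{2}$ pairs of a well-linked set of size $q$). This is precisely the place where the $\log k$ factor of the theorem arises, as it matches the best known bounds relating treewidth to such embeddings via the flow/cut gap for undirected multicommodity flow. Importantly, the lemma must be proved \emph{pointwise} for the individual graph $G$, not merely averaged over a class of graphs, in order to obtain the ``for any fixed primal graph'' form of Theorem~\ref{T:beattreewidth}.

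Given such an embedding and an input $(F, q)$ for $\textsc{Partitioned Clique}$ with $|V(F)| = n$, I would then construct a CSP instance $I = (V(G), D, C)$ whose primal graph is exactly $G$ and whose domain has size $|D| = \mathrm{poly}(n)$. The domain elements encode choices of representatives from the $q$ partition classes of $F$, and the binary constraints along the internal vertices of each edge-path of the embedding propagate consistency, while the constraints at the endpoints check the adjacency in $F$ of the pair of chosen representatives corresponding to that path. Standard bookkeeping shows that a satisfying assignment of $I$ corresponds to a $q$-clique in $F$, and vice versa.

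Combining the pieces, a hypothetical $O(|D|^{\alpha k/\log k})$ algorithm for binary CSP on primal graph $G$ would solve $\textsc{Partitioned Clique}$ on $(F,q)$ in time $n^{O(\alpha k/\log k)} = n^{O(\alpha q / c)}$ for the absolute constant $c$ hidden in $q = \Omega(k/\log k)$; choosing $\alpha$ small enough in terms of $c$ turns this into $n^{o(q)}$, contradicting Theorem~\ref{th:chenclique}. The main obstacle is the embedding lemma in its pointwise form: one must show that every single graph of treewidth $k$, no matter how irregular, supports a clique-like embedding of order $\Omega(k/\log k)$ with bounded congestion. This requires working not with the excluded grid theorem (whose polynomial bounds would be too weak here) but with the sharper treewidth-vs-concurrent-flow machinery, and then transferring the embedding to a uniform construction whose congestion constants are absorbed into $\alpha$.
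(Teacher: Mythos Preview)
The paper you are working from is a survey; it states Theorem~\ref{T:beattreewidth} with a citation to \cite{jacm-genus} and gives no proof of its own. So there is no ``paper's own proof'' to compare against here---the theorem is quoted as an external result, as is the closely related Theorem~\ref{th:beattw} from \cite{marx-toc-treewidth}.

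That said, your proposal accurately reconstructs the strategy used in those cited sources. The reduction from \textsc{Partitioned Clique}, the embedding lemma routing $\binom{q}{2}$ pairs through $G$ with bounded congestion for some $q=\Omega(k/\log k)$, the identification of the $\log k$ loss with the multicommodity flow/cut gap rather than with the excluded grid theorem, and the need for a pointwise (per-graph) version of the embedding to obtain the ``for any fixed primal graph $G$'' formulation---all of this matches the actual architecture of the proofs in \cite{marx-toc-treewidth} and \cite{jacm-genus}. Your remark that the grid-minor route would be too lossy is exactly right, and the final step of absorbing the congestion constant into $\alpha$ and invoking Theorem~\ref{th:chenclique} is the intended conclusion. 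The only caveat is that turning the embedding lemma into a rigorous statement (well-linked sets, depth of the embedding, and the precise way congestion translates into a polynomial blow-up of $|D|$) is where all the technical work lives; your sketch names the right ingredients but does not yet discharge them.
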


\section{The Strong Exponential-Time Hypothesis}\label{sec:strong-expon-time}

Despite the usefulness of the ETH, there are complexity lower bounds
that seem to be beyond the reach of what can be proved as a
consequence of this hypothesis. Impagliazzo, Paturi, and Zane \cite{MR1894519} proposed an even stronger assumption on the complexity of NP-hard problems: the so-called Strong Exponential-Time Hypothesis (SETH). Using the notation introduced at the beginning of Section~\ref{sec:expon-time-hypoth}, the SETH assumes that $\lim_{k\to\infty}s_k=1$. The following consequence of the SETH is a convenient formulation that can be used as a starting point for lower bounds on other problems:
\begin{hypothesis}[Consequence of the SETH, Impagliazzo, Paturi, and Zane \cite{MR1894519}]\label{ass:seth}
\textsc{SAT} with $n$ variables and $m$ clauses cannot be solved in time $(2-\epsilon)^n \cdot m^{O(1)}$ for any $\epsilon>0$.
\end{hypothesis}
Intuitively, Hypothesis~\ref{ass:seth} states that there is no better algorithm for \textsc{SAT} than the brute force search of trying each of the $2^{n}$ possible assignments. 
Note that here \textsc{SAT} is the satisfiability problem with unbounded clause length. For fixed clause length, algorithms better than $2^{n}$ are known: for example, the best known algorithms for \textsc{3SAT} and \textsc{4SAT} have running times $1.308^n$ and $1.469^n$, respectively \cite{DBLP:journals/siamcomp/Hertli14}. The SETH states that the base of the exponent has to get closer and closer to~1 as the clause length increases, and it is not possible to have an algorithm with base $2-\epsilon$ that works for arbitrary large clause length.

It is important to note that there is no known analogue of the Sparsification Lemma for the SETH. That is, we cannot assume that the hard instances stipulated by Hypothesis~\ref{ass:seth} have only a linear number of clauses: for all we know, the number of clauses can be exponential in the number $n$ of variables. This severely limits the applicability of lower bounds based on the SETH as any reduction from the \textsc{SAT} instance would create instances whose sizes are potentially exponentially large in $n$. Nevertheless, the SETH has found applications in parameterized complexity, for example, giving tight lower bounds on how the running time has to depend on treewidth \cite{DBLP:journals/talg/LokshtanovMS18,BorradaileL15,CurticapeanM16,CyganDLMNOPSW16,CyganNPPRW11,DBLP:conf/esa/IwataY15,DBLP:conf/stacs/EgriMR18,JaffkeJ17}.

An important parameterized problem for which the SETH gives a very tight lower bound is \textsc{Dominating Set}. The {\em closed neighborhood} $N[v]=N(v)\cup \{v\}$ of a vertex $v$ consists of the vertex itself and its neighbors. A {\em dominating set} $S$ is a set of vertices that contains a vertex from the closed neighborhood of every vertex, in other words, every vertex is either selected or has a selected neighbor. In the \textsc{Dominating Set} problem,  given a graph $G$ and an integer $k$, the task is to find a dominating set $S$ of size at most $k$. If $G$ is an $n$-vertex graph, then the trivial brute force algorithm enumerates  the $O(n^k)$ subsets of size at most $k$ and needs $O(n^2)$ time for each of them to check if they form a solution. This results in a $O(n^{k+2})$ time algorithm, which can be improved to $n^{k+o(1)}$ \cite{DBLP:journals/tcs/EisenbrandG04}. The following lower bound shows that any small constant improvement beyond $k$ in the exponent would violate the SETH.
\begin{theorem}[Patrascu and Williams \cite{patrascu10sat-lbs}]\label{th:domseth}
If there is an integer $k\ge 3$ and a real number $\epsilon >0$ such that \textsc{$k$-Dominating Set} can be solved in time $O(n^{k-\epsilon})$ on $n$-vertex graphs, then the SETH is false.  
\end{theorem}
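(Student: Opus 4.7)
The plan is to reduce \textsc{SAT} to \textsc{$k$-Dominating Set} via a block-by-block construction, mirroring the standard technique of encoding assignments to groups of variables as vertex choices. Fix the hypothetical integer $k\ge 3$ and $\epsilon>0$, and consider a \textsc{SAT} instance $\phi$ with $n$ variables and $m$ clauses. I would partition the variables into $k$ blocks $V_1,\dots,V_k$ of size $n/k$ each. For each block $V_i$ I would enumerate all $2^{n/k}$ partial assignments to $V_i$ and create a clique $A_i$ of $2^{n/k}$ vertices, one vertex $a_i^\alpha$ per partial assignment $\alpha$. For each clause $C_j$ of $\phi$, I would add a clause-vertex $c_j$ and make it adjacent to $a_i^\alpha$ whenever $\alpha$ satisfies $C_j$ on the variables of~$V_i$.

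The heart of the reduction is a small enforcer gadget that makes the budget $k$ pay for exactly one assignment per block. For each block~$i$, I would introduce two private vertices $p_i,q_i$, each adjacent to every vertex of $A_i$ but \emph{not} to each other and not to anything outside $A_i$. Then the only way to dominate $p_i$ (resp.\ $q_i$) is to include a vertex from $A_i\cup\{p_i\}$ (resp.\ $A_i\cup\{q_i\}$); in particular, if no vertex of $A_i$ is selected, both $p_i$ and $q_i$ must be selected, so that block costs two instead of one. A simple counting argument (with $g$ good and $b$ bad blocks, $g+b=k$ and budget $\ge g+2b$) forces $b=0$, so any dominating set of size $\le k$ picks exactly one $a_i^{\alpha_i}\in A_i$ per block and nothing else. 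The combined tuple $(\alpha_1,\dots,\alpha_k)$ is a complete assignment, and the clause-vertex $c_j$ (which cannot itself be selected because the budget is exhausted) is dominated iff some $\alpha_i$ satisfies $C_j$, i.e.\ iff the combined assignment satisfies~$\phi$. Thus $G$ has a dominating set of size~$k$ iff $\phi$ is satisfiable.

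For the running time, the graph has $N := k\cdot 2^{n/k}+2k+m$ vertices, and the reduction takes $\mathrm{poly}(N,m)$ time. Applying the hypothesized $O(N^{k-\epsilon})$ algorithm gives a \textsc{SAT} algorithm of time
\[
O\bigl(N^{k-\epsilon}\bigr)\;\le\;O\bigl((2^{n/k})^{k-\epsilon}\bigr)\cdot m^{O(1)}\;=\;O\bigl(2^{n(1-\epsilon/k)}\bigr)\cdot m^{O(1)}\;=\;O\bigl((2-\delta)^n\bigr)\cdot m^{O(1)}
\]
with $\delta:=2-2^{1-\epsilon/k}>0$, contradicting Hypothesis~\ref{ass:seth}. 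The main obstacle, in my view, is exactly the enforcer gadget: it must use only \emph{constantly many} extra vertices per block (so that $N=\Theta(2^{n/k})$ is preserved and the exponent $k-\epsilon$ translates into $n(1-\epsilon/k)$), and it must be ``rigid'' in the sense that a budget of $k$ permits neither replacing a block-choice by a clause-vertex nor serving two blocks with one helper. Getting the gadget precisely right, and verifying that no accidental cross-block or block-to-clause edge loosens the budget, is where the care is needed; the remaining accounting is essentially bookkeeping.
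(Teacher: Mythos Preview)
The paper does not prove Theorem~\ref{th:domseth}; it is quoted as a result of P\u{a}tra\c{s}cu and Williams~\cite{patrascu10sat-lbs} and then used as a black box in the proof of the subsequent theorem on CSPs with bounded-treewidth primal graphs. So there is no ``paper's own proof'' to compare your attempt against.

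That said, your proposal is essentially the standard P\u{a}tra\c{s}cu--Williams reduction and is correct. The split of the variable set into $k$ blocks, the clique $A_i$ of $2^{n/k}$ partial assignments per block, the clause vertices wired to the partial assignments that satisfy them, and the two private pendants $p_i,q_i$ as a selection enforcer all work as you describe; the counting $|S|\ge g+2b=k+b$ forcing $b=0$ is exactly the right way to pin the budget. One minor cosmetic point: the displayed inequality $N^{k-\epsilon}\le O\bigl((2^{n/k})^{k-\epsilon}\bigr)\cdot m^{O(1)}$ is not literally a product bound since $N$ is a \emph{sum} $k\cdot 2^{n/k}+2k+m$; but from $(a+b)^{k-\epsilon}\le 2^{k}\bigl(a^{k-\epsilon}+b^{k-\epsilon}\bigr)$ one gets two terms, $O\bigl(2^{n(1-\epsilon/k)}\bigr)$ and $O(m^{k})$, each of which is $(2-\delta)^n\cdot m^{O(1)}$, so the conclusion is unaffected.
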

As a demonstration, we show how 
Theorem~\ref{th:domseth} allows us to make Theorem~\ref{th:csptweth} tighter. We are not just ruling out $|D|^{o(k)}$ time, but any potential improvement in the exponent of the domain size beyond $k$, getting closer to the upper bound of Theorem~\ref{th:freuder}.
\begin{theorem}
  If there are integers $k\ge 3$, $c\ge 1$, and a real number $\epsilon >0$ such that there is an algorithm solving CSP instances $I=(V,D,C)$ whose primal graph has treewidth at most $k$ in time $O(|V|^c\cdot |D|^{k-\epsilon})$, then the SETH is false.
\end{theorem}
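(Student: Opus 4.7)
The plan is to reduce \textsc{$(k+1)$-Dominating Set} on $n$-vertex graphs to CSP instances of primal-graph treewidth $k$ and then invoke the hypothesized algorithm to contradict Theorem~\ref{th:domseth}. Because the hypothesis gives $k \ge 3$, the value $k+1$ lies in the range $[4,\infty)$ where Theorem~\ref{th:domseth} applies, so it suffices to obtain an $O(n^{(k+1)-\delta})$-time algorithm for \textsc{$(k+1)$-Dominating Set} for some $\delta > 0$.

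Given an input graph $G$ on $n$ vertices, I construct the CSP $I = (V,D,C)$ with variable set $V = \{x_1,\ldots,x_{k+1}\}$, domain $D = V(G)$, and one $(k+1)$-ary constraint $c_v$ for each $v \in V(G)$. The scope of $c_v$ is $(x_1,\ldots,x_{k+1})$ and its relation requires that at least one coordinate lies in the closed neighborhood $N[v]$. The primal graph of $I$ is therefore $K_{k+1}$, which has treewidth exactly $k$, and the satisfying assignments of $I$ are in bijection with the $(k+1)$-tuples of vertices whose underlying set dominates $G$. Since $|V| = k+1$ and $c$ are both constants (determined by the hypothesis), applying the hypothesized algorithm decides $I$ in time $O(|V|^c |D|^{k-\epsilon}) = O((k+1)^c \cdot n^{k-\epsilon}) = O(n^{k-\epsilon}) = O(n^{(k+1)-(1+\epsilon)})$. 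With $\delta := 1+\epsilon > 0$, this contradicts Theorem~\ref{th:domseth} and hence refutes SETH.

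The one delicate point is the representation of the constraints: written as an explicit table of satisfying tuples, each $c_v$ can have size $\Theta(n^{k+1})$, which vastly exceeds the target running time and, taken literally, would prevent the algorithm from even reading its input. I would resolve this by observing that $c_v$ is fully determined by the list $N[v]$, so the instance $I$ admits an obvious compact encoding of total bit-length $O(n^2)$; the hypothesized algorithm is to be read as operating on such compact encodings, consistent with the reading required to make sense of Theorem~\ref{th:freuder} itself, whose $O(|V|\cdot|D|^{k+1})$ bound matches only the standard compact tree-decomposition form of the input. Making this conventional reading precise is the only mild obstacle; once it is settled, the chain of implications above closes the proof.
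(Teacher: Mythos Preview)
Your high-level plan---encode \textsc{Dominating Set} as a treewidth-$k$ CSP and invoke Theorem~\ref{th:domseth}---matches the paper's, but the issue you label a ``mild obstacle'' is exactly where your argument breaks, and the paper takes a substantially different route to avoid it.

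Under the standard representation of a CSP instance (each constraint relation listed as an explicit table), your instance has size $\Theta(n^{k+2})$ in the worst case, already exceeding the target running time $O(n^{k-\epsilon})$; the hypothesized algorithm cannot even read it. Switching to a compact encoding does not rescue the argument: the hypothesis is stated for ordinary CSP instances, and an algorithm meeting it on explicitly presented inputs is under no obligation to run fast on your succinct description. Indeed, a perfectly legitimate way to satisfy the hypothesis is to be slow precisely on instances whose explicit size exceeds $|V|^c|D|^{k-\epsilon}$---which includes yours. So this is a genuine gap, not a convention to be fixed after the fact.

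The paper sidesteps the problem by using only \emph{binary} constraints. It encodes \textsc{$t$-Dominating Set} with $t$ ``solution'' variables $s_1,\dots,s_t$ and $n$ ``check'' variables $x_1,\dots,x_n$ over domain $[n]$, placing one binary constraint between each $s_i$ and each $x_j$; the primal graph is $K_{t,n}$, of treewidth $t$. This keeps the instance polynomial-sized, but introduces two new costs: the treewidth is $t$ rather than $k$, and now $|V|=\Theta(n)$, so the $|V|^c$ factor contributes an extra $n^c$. Both are handled by a single grouping trick: merge the $t$ solution variables into $k$ super-variables over domain $[n]^{g}$ with $t=gk$, choosing $g$ large enough that the $g\epsilon$ shaved off the exponent of $|D'|$ absorbs the $n^c$ loss and still leaves a net saving of $\epsilon$. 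The resulting instance has explicit size $O(n^{2g+1})$, which the paper verifies is below the running-time budget $O(n^{t-\epsilon})$. This blow-up/grouping step is the real content of the proof; your direct $(k{+}1)$-variable construction bypasses the need for it only by pushing the difficulty into unbounded-size constraint relations, which the intended reading of the theorem does not allow.
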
  
\begin{proof}
  Let  $g\ge 1$ be the smallest integer such that $g\epsilon>c+\epsilon$ and let $t=gk$.
First we present a generic reduction from \textsc{$t$-Dominating Set} on an $n$-vertex graph to a CSP instance with domain size $n$ where the treewidth of the primal graph is $t$. Let $G$ be an $n$-vertex graph where we need to find a solution $S$ of size at most $t$. For simplicity of notation, let us assume that $V(G)=[n]$. We construct a CSP instance $I=(V,D,C)$ the following way. The set $V$ contains $t+n$  variables $s_1$, $\dots$, $s_t$, $x_1$, $\dots$, $x_n$. The domain $D$ is $V(G)=[n]$. The intended meaning of the value of $s_i$ is the $i$-th vertex of the solution, and the intended meaning of $x_j=i$ is that the solution vertex represented by $s_i$ is in $N[j]$. To enforce this interpretation, for every $i\in [t]$ and $j\in[n]$, we introduce a constraint $c_{i,j}=\langle (s_i,v_j),R_{i,j}\rangle$, where
  \begin{align*}
    R_{i,j}=&\left\{ (a,b) \mid a \in [n], b\in [t], b\neq i \right\} \bigcup \\
&    \left\{ (a,b) \mid a \in [n], b\in [t], b= i, a\in N[j] \right\}
  \end{align*}
  It is not difficult to check that if there is a solution to this CSP instance, then $\{s_1,\dots,s_t\}$ is a dominating set (as vertex $s_{x_j}$ is in the closed neighborhood $N[j]$ of $j$). Conversely, a solution $S$ of \textsc{$t$-Dominating Set} can be turned into a solution of this CSP instance. Observe that the  primal graph is complete bipartite graph with $t$ vertices on one side and $n$ vertices on the other side. Such a graph has treewidth at most $t$.

  To obtain the required form of the lower bound, we need to modify the constructed CSP instance. Let us group the variables $x_1$, $\dots$, $x_t$ into $t/g=k$ groups of size $g$ each. If we increase the domain from $D$ to $D^g$ (having size $n^g$), we can represent each group with a single new variable (and modify the constraints accordingly). This way, we can obtain an equivalent CSP instance $I'=(V',D',C')$ where $|D'|=n^g$ and treewidth of the primal graph is at most $k$. By our assumption, this CSP instance $I'$ (and hence the original \textsc{$t$-Dominating Set} instance) can be solved in time

  \begin{align*}
    O(|V'|^c\cdot |D'|^{k-\epsilon})&=
    O(n^c\cdot n^{g( t/g-\epsilon)})\\
    &=O(n^{t+c-g\epsilon})=O(n^{t-\epsilon}).
  \end{align*}
  The size of the constructed instance $I'$ can be generously bounded by $O(n^{2g+1})$ which is less than $O(n^{t-\epsilon})$. The reduction presented above can be done in time linear in the size of $I'$, hence the running time of the reduction itself is dominated by the running time of solving $I'$ with the assumed algorithm. Therefore, we obtain an algorithm for solving \textsc{$t$-Dominating Set} in time $O(n^{t-\epsilon})$.
  By Theorem~\ref{th:domseth}, this violates SETH.
\end{proof}

In recent years, the SETH has been successfully used to give lower bounds for polynomial-time solvable problems, for example, by showing that the textbook $O(n^2)$ dynamic programming algorithm for \textsc{Edit Distance} cannot be significantly improved: it cannot be solved in time $O(n^{2-\epsilon})$ for any $\epsilon>0$, unless the SETH fails \cite{DBLP:conf/focs/BringmannK15,DBLP:journals/siamcomp/BackursI18}. Many other tight results of this form can be found in the recent literature under the name ``fine-grained complexity'' \cite{DBLP:conf/soda/BringmannGMW18,DBLP:conf/soda/BringmannK18,DBLP:conf/stoc/AbboudBDN18,DBLP:conf/focs/BringmannK15,DBLP:conf/focs/Bringmann14,DBLP:journals/siamcomp/BackursI18,AbboudBW15,DBLP:journals/siamcomp/AbboudWY18,Williams15,patrascu10sat-lbs,RodittyW13}.

\section{Other conjectures}
We finish the overview of lower bound techniques with a few other complexity conjectures that have appeared recently in the literature. This section does not contain any strong results or nontrivial reductions; the goal is to present assumptions that have direct consequences when translated into the language of database theory and CSP. The aim is to raise awareness of the existence of these conjectures, which may be the starting point of future research.

\textbf{The $k$-clique conjecture.} Matrix multiplication techniques can be used to detect if a graph contains a triangle: if $A$ is the adjacency matrix of $G$, then $G$ contains a triangle if and only if $A^3$ has a nonzero value on the diagonal. Therefore, if we have an algorithm for multiplying two $n\times n$ matrices in time $O(n^\omega)$ for some $\omega$ (the current best known algorithm has $\omega<2.3729$ \cite{DBLP:conf/soda/AlmanW21}), then we can detect in time $O(n^3)$ if an $n$-vertex graph contains a triangle. Nesetril and Poljak \cite{nevsetvril1985complexity} showed in 1985 that this can be generalized further for detecting a clique of size $k$ in time $O(n^{\omega k/3})$ (if $k$ is divisible by $3$). As no significant improvement over this approach appeared in the past 35 years, one can conjecture that there is no $O(n^{(\omega-\epsilon)k/3+c})$ time algorithm for \textsc{$k$-Clique} for any $\epsilon,c>0$. Abboud, Backurs, and Vassilevska Williams \cite{DBLP:journals/siamcomp/AbboudBW18} used this conjecture to give evidence that Valiant's $O(n^\omega)$ time parsing algorithm \cite{DBLP:journals/jcss/Valiant75} from 1975  is optimal.

As discussed in Section~\ref{sec:graph-problems}, \textsc{$k$-Clique} on an $n$-vertex graph can be represented as a CSP with $k$ variables, $\binom{k}{2}$ constraints, and domain size $n$. Therefore, the $k$-clique conjecture further refines
Theorem~\ref{th:cspclique} by ruling out not only a $|D|^{o(|V|)}$ dependence in the running time, but also $|D|^{(\omega-\epsilon)|V|/3+c}$ for any $\epsilon,c>0$.

\textbf{The $d$-uniform hyperclique conjecture.} A hypergraph is $d$-uniform if every hyperedge contains exactly $d$ vertices. The analog of a $k$-clique in a $d$-uniform hypergraph is a set $S$ of $k$ vertices such that each of the $\binom{k}{d}$ possible hyperedges are present in $S$. Somewhat surprisingly, matrix multiplication techniques seem to speed up the search for $k$-cliques only for $d=2$ (ordinary graphs). For any fixed $d\ge 3$, nothing
substantially better is known than trying every set of size $k$. This suggests the conjecture that there is no $O(n^{(1-\epsilon)k+c})$ time algorithm for detecting $k$-cliques in $d$-uniform hypergraphs for any fixed $d\ge 2$ and $\epsilon,c>0$ \cite{DBLP:conf/soda/LincolnWW18}. We can again translate this conjecture into the language of CSPs: we can show that even if the arity of every constraint is at most 3, there is no $f(|V|)\cdot |D|^{(1-\epsilon)|V|+c}\cdot n^{O(1)}$ algorithm for CSP for any $\epsilon,c>0$ and computable function $f$. Therefore, we get very tight lower bounds showing that essentially the brute force search of all assignments cannot be avoided. The $d$-uniform hyperclique conjecture was used to rule out the possibility of constant-delay enumeration algorithms \cite{DBLP:conf/csl/BaganDG07,DBLP:phd/hal/BraultBaron13}.

\textbf{The triangle conjecture.} In database query problems it is more relevant to express the running time in terms of the size of the database relations rather than the size of the domain. For example, given the query $Q=R_1(a_1,a_2)\bowtie R_2(a_1,a_3)\bowtie R_3(a_2,a_3)$ whose primal graph is the triangle, matrix multiplication can be used to check in time $O(d^\omega)$ if the answer is empty, where $d$ is the size of the domain of the attributes. But what can we say about the running time expressed as a function $N$ of the maximum size of the relations? As we have seen in Section~\ref{sec:uncond-lower-bounds}, the size of the solution is $O(N^{3/2})$, we can enumerate it in time $O(N^{3/2})$, and this is tight. However, this does not rule out the possibility that there are faster algorithms for deciding if the answer is empty. In particular, can this be solved in linear time? Note that this question is equivalent to asking for the best possible running time for detecting triangles, where the running time is now expressed as a function of the number $m$ of edges. The best known algorithm of this form detects the existence of a triangle in time $O(m^{2\omega/(w+1)})$ \cite{DBLP:journals/algorithmica/AlonYZ97} and one can state as a conjecture (Strong Triangle Conjecture \cite{DBLP:conf/focs/AbboudW14}) that this is indeed best possible.

\section{Conclusions}
We have seen a sequence of complexity lower bounds of various
strengths.  The results were based on assumptions with different
levels of plausibility, going all the way from unconditional bounds,
classic NP-completeness, to novel conjectures. The results also differ in the tightness of the lower bound: they can be only qualitative results (polynomial-time vs.~NP-hard, FPT vs.~W[1]-hard) or quantitative lower bounds showing the optimality of current algorithms to various levels of tightness.

What can we learn from all these results? First, when aiming for a lower bound, we need to select a precise form of the bound. Ideally, we would like to have negative results that rule out the possibility of any improved algorithm compared to what is known currently, showing that they are already optimal. The meaning of ``any improved algorithm'' needs to be clarified precisely and the choice of this meaning can greatly influence the technical difficulty of the lower bound proof and the required assumptions. Second, we need to choose a suitable complexity assumption that we can base the result on. There are established conjectures, such as the ETH, that are widely used in different domains. But we should be ready to connect our database theory or CSP problem at hand with other, less celebrated open question as well. The general theme of conditional lower bounds is to transform a relatively specialized question to a more fundamental question that was studied from multiple directions. If we can formally establish that the main challenge in understanding our problem is really some other, more fundamental problem in disguise, then this means that spending further efforts on finding improved algorithms is not timely and we can assume for the time being that any algorithm matching the lower bound is optimal. This is a common situation in complexity theory: as it is often said, computational complexity progresses by reducing the number of questions, without increasing the number of answers.

\begin{acks}
 Research supported by the European Research Council (ERC) consolidator grant No.~725978 SYSTEMATICGRAPH.
\end{acks}

\bibliographystyle{ACM-Reference-Format}
\balance
\bibliography{main5}

\end{document}